%% file: main.tex
\documentclass[runningheads]{llncs}
\usepackage[T1]{fontenc}
\usepackage{graphicx}
\usepackage{custom-macros}

\usepackage{color}

\begin{document}
\title{Runtime Monitoring of Distributed Cyber-Physical Systems Without a Global Clock\thanks{This research was partially supported by NSF awards 2118179, 2145291 and 2416461.}}
%
% \titlerunning{Abbreviated paper title}
% If the paper title is too long for the running head, you can set
% an abbreviated paper title here

\author{Charles Koll\orcidID{0000-0001-5941-250X} \and
Houssam Abbas\orcidID{0000-0002-8096-2618}}
\authorrunning{C. Koll, H. Abbas}
% First names are abbreviated in the running head.
% If there are more than two authors, 'et al.' is used.
%
\institute{Oregon State University, Corvallis OR 97331, USA\\
\email{\{kollch,abbasho\}@oregonstate.edu}}
\maketitle              % typeset the header of the contribution
\begin{abstract}
We give the first theoretical characterization, and the first algorithm, for continuous monitoring of a distributed Cyber-Physical System (CPS) against a dense-time temporal logic specification. A distributed CPS is composed of multiple agents, each with a local clock; these clocks drift from each other, so there is no well-defined global time. When monitoring such a system's output signal against a temporal logic specification, it is not evident how to interpret the temporal constraints of the formula, and what satisfaction means. Yet CPS designers, like control engineers, typically think of their system's operation in terms of global time. Most existing techniques for monitoring distributed systems work with discrete-time specifications not suitable for CPS, and/or require an explicit mapping of temporal constraints to local clocks. We introduce an algorithm that addresses the above challenges for a fragment of Signal Temporal Logic (STL) that still includes all temporal operators. It relies on a novel extension of satisfaction signals to this partially synchronous setting (where clocks drift), and an analysis of the geometry of multi-dimensional partially synchronous time. The algorithm returns the set of all possible global moments that can satisfy the specification. Knowledge of these possible global moments is important for debugging distributed hybrid control systems such as fleets of drones and electrical grids. We derive the worst-case complexity of the algorithm, and implement a sound approximation of it that experimentally illustrates effective monitoring, even in scenarios of up to 50 agents.

\keywords{Runtime monitoring \and Partially synchronous systems \and Dense time \and Temporal logic \and Distributed systems.}
\end{abstract}
\section{Introduction: Monitoring Dense-Time Requirements Without a Shared Clock}
\label{sec:intro}
Cyber-Physical Systems (CPS) are key components of the devices and processes around us. 
A {\em distributed CPS} consists of multiple communicating processes or {\em agents}. Examples include networks of autonomous vehicles \cite{feng2018design}, fleets of drones \cite{wang2019survey}, smart controllers in electrical grids \cite{gavriluta2020cyber}, and geographically dispersed sensor networks \cite{chen2015ubiquitous}.
A common formalism for CPS is a hybrid system, whose output signals can evolve both continuously and discontinuously.

Regardless of the distributed CPS' design process, it is often necessary to monitor its operation at runtime, to detect whether it is breaching the CPS' correctness requirements. 
These requirements are commonly expressed as a formula in some temporal logic.
Three challenges face monitoring temporal logic requirements in distributed CPS, compared to monitoring traditional discrete distributed systems:
first, the individual agents that make up the system each have a local clock, and these clocks drift from each other. When two agents report a value of their signal at local time $t$, these two values are not necessarily synchronous. If the temporal logic formula says `Within 5 sec, $p$ is true', the monitor must find a reasonable interpretation of the temporal constraint `Within 5 secs' -- for example, on which local clock are 5 seconds to be measured? Is it reasonable to use just one local clock?
Second, as CPS include a physical aspect (e.g. a car's dynamics or the physical organ controlled by a medical device), they measure and compute over signals in \textit{dense time}, aka continuous time; for any finite time span, there are uncountably many `events' in the system.
Most work in distributed systems monitoring deals with discrete (or logical) time with at most countably many events (e.g. \cite{fabre2002monitoring,tekken2017monitoring,zhao2001distributed,ganguly2022distributed}).
Third, monitoring is usually (though not always) a continual process: we want to determine whether the system satisfies the specification at every moment, not only at global time 0. So we cannot rely on a monitor that only returns a punctual verdict (one that holds at a given time $t$), since there are uncountably many moments in dense time.

This work addresses these challenges for \textit{partially synchronous} systems: such systems use an algorithm, like NTP \cite{ntp}, to keep their clocks within a known bound $\cskew$ of each other.
Signal values occurring within $\cskew$ time units of each other might be synchronous, and so should all be explored by the monitor for possible violation or satisfaction.
This paper develops a theoretical characterization of the geometry and dynamics of the set of possibly synchronous satisfying moments, aka \emph{satcuts}.
On that basis it develops an algorithm that approximates, to arbitrary precision, the set of \textit{all} satcuts. 
A fragment of Signal Temporal Logic (STL) \cite{maler2004monitoring} is used as specification language. STL is widely used for specifying requirements of CPS, such as `At every moment between 0 and $100$ ms, a critical separation is followed, 6 to 10.5 ms later, by a negative acceleration'.
By using plain STL, the CPS designer can continue to treat the system as though it were perfectly synchronous, leaving the burden of dealing with asynchronicity to the monitor. 
We give an offline monitor for a significant fragment of STL that we name \emph{\distl}, characterize its complexity, and develop an implementation for experimental evaluation.

\emph{Related work.}
There is a vast literature on distributed digital systems, which can be surveyed in \cite{garg02book}. 
Most of this work uses discrete or logical time, such as \cite{fabre2002monitoring,tekken2017monitoring,zhao2001distributed,ganguly2022distributed}, and is not applicable here. 
A result from \cite{chase1998detection} shows that the complexity of monitoring such a system in general is NP-complete.
Two papers \cite{momtaz2023predicate,koll2023decentralized} address monitoring dense-time distributed systems, but only for the special case of boolean predicates and not temporal logic specifications.

Existing temporal logics for specifying properties of distributed systems, like \cite{basin2011distributed,baumeister2021temporal,sen2004efficient}, do not preserve the abstraction of a single synchronized system for the engineer designing the CPS. 
STL \textit{temporal robustness} is introduced in \cite{donze2010robust} to quantify by how much a signal can be shifted in time while preserving its truth value relative to an STL formula. This is a special case of our setting: in our partially synchronous setting, signals can shift by varying amounts at different points in time. Thus we explore a much broader set of synchronizations than is measured by temporal robustness.
This difference of constant versus variable temporal shifts of the signal prevents us from using their approach for monitoring a distributed system.

Finally, \cite{momtaz2023monitoring} does online STL monitoring using an SMT solver, but only returns whether the spec is satisfied at time 0, while we return all such (possibly) synchronous moments.
In addition, we are able to characterize the complexity of our algorithm in terms of meaningful quantities, like the quality of our approximations, rather than in terms of the number of variables in a particular SMT encoding.
Thus there is no work directly comparable to this work.

{\em Contributions.} This paper develops the theory of partially synchronous monitoring in dense time. We:
\begin{enumerate}
    \item Identify a fragment of STL, \emph{\distl}, that is amenable to monitoring over partially synchronous systems. This fragment includes all temporal operators.
    \item We characterize the geometry of the set of satcuts, which are possibly synchronous moments, from the $N$ agents' timelines, that satisfy the STL formula.
    \item We provide the first offline monitor that approximates, to arbitrary precision, the entire set of satcuts (aka the satisfaction signal).
    \item We also provide an implementation -- an outer approximation of the monitor that demonstrates efficient monitoring of distributed systems with 50+ agents.
\end{enumerate}

{\em Organization}. 
Preliminaries of distributed systems and STL are in \autoref{sec:prelims}, the problem and STL fragment \distl{} are formulated in \autoref{sec:problem formulation}, the monitor is described in \autoref{sec:offline mon} along with its complexity, and the implementation and experimental results are illustrated in \autoref{sec:experiments}. \autoref{sec:conclusion} concludes.

%========================================================
\section{Preliminaries}\label{sec:prelims}
The set of reals is denoted by $\reals$, the set of non-negative reals by $\nnreals$, and the set of positive reals as $\preals$. The set $\{1, 2, \dots, N\}$ is abbreviated as $[N]$. Given a set $S$, $2^S$ is the set of subsets of $S$. 

In $\reals^N$, a {\em polytope} is a bounded intersection of half-spaces, and a {\em non-convex polytope} is a connected union of polytopes: so between any two points in the set it is possible to draw a curve, \textit{not necessarily a straight line}, that belongs entirely to the set. A {\em box} $B$ is an axis-aligned polytope, i.e. $B = [a_1,b_1]\times\ldots \times [a_N,b_N]$ for some reals $a_i,b_i$. 
A \emph{lower boundary} of $B$ is any hyperplane $\{x\in \reals^N \such x_i=a_i\}$.
Given two subsets $S_1$ and $S_2$ of $\nnreals^N$, we define $S_1 \oplus S_2 \defeq \{s_1 + s_2 \such s_1 \in S_1, s_2 \in S_2\}$ and $S_1 \ominus S_2 \defeq  \nnreals^N\cap \{s_1 - s_2 \such s_1 \in S_1, s_2 \in S_2\}$. For $a \in \nnreals^N$, we write $a \oplus B$ and $a \ominus B$ for $\{a\} \oplus B$ and $\{a\} \ominus B$, respectively.

Reference (hypothetical) time values are denoted by $\gclk$, $\gclk'$, etc, while $t$, $t'$, $t_1$, $t_2$, $s$, $s'$, $s_1$, $s_2$, etc. denote \emph{local} clock values specific to given agents.
Given a vector $v$ its $k^{th}$ element is written $v[k]$.

Various proofs in this work are sketched. The remaining (full) proofs can be found in the appendix.

\subsection{Signal Model}
We consider a system consisting of \emph{$N$ agents} that do not fail, denoted by $\{\agent_1, \dots, \agent_N\}$, without any shared memory or global clock. The output signal of agent $\agent_n$ is denoted by $x_n$, for $n \in [N]$. It is simply a function of time.
A \emph{right-continuous} signal is one s.t. at all $t$ in its support, $\lim_{s \rightarrow t_+} x(s) = x(t)$. It is \emph{left-limited} if it has a finite left-limit at every $t$ in its support: $\lim_{s \rightarrow t_-} x(s) < \infty$. A \emph{Zeno} signal has an infinite number of discontinuities in at least one bounded interval in its support. A discontinuity in signal $x(\cdot)$ can be due to a discrete event in the agent (like a variable update by software).

\begin{definition}[Output/Distributed signals]
    \label{def:dstr signal}
    Let $\timeline \defeq \nnreals$. An \emph{output signal} of some agent $\agent$ is a function $x: \timeline \mapsto \reals^d$, which is right-continuous, left-limited, and is not Zeno. We refer to $\timeline$ as the \emph{timeline} for agent $\agent$.

    A \emph{distributed signal} $\dstrsig \defeq (x_n)_{n \in [N]}$ on $N$ agents is a collection of $N$ output signals.
\end{definition}

Without loss of generality, we assume that $x$ is one-dimensional, i.e., $d = 1$. 

Intuitively, each local clock increases strictly (as real time passes), and all local clocks remain within a known bound from each other, e.g. by using a synchronization algorithm like NTP \cite{mills2010network}.
To model these properties we will need to refer to a hypothetical reference clock $\gclk$. This reference clock is a purely hypothetical object used in definitions and theorems, and is not a real clock nor is it used in the algorithms.

\begin{assumption}[Partial synchrony]\label{ass:partialsync}
    The \emph{local clock} of an agent $\agent_n$ can be represented as a strictly increasing function $c_n: \timeline \mapsto \timeline$, where $c_n(\gclk)$ is the value of the local clock at reference time $\gclk$. 
    Moreover there exists a positive real $\cskew$ s.t. for any agent $\agent_n$ and for all $\gclk$ in $\timeline$, $| c_n(\gclk) - \gclk | \leq \cskew$. The constant $\cskew$ is the maximum \emph{clock skew}, which is assumed fixed and known by the system designer. Finally, for all $n$, $c_n(0)=0$.
\end{assumption}

The maximum clock skew $\cskew$ is fixed, but signals can drift arbitrarily within $\cskew$. This enables us to model transient phenomena of NTP and other such algorithms. 
An {\em event on agent $n$} is simply a local time/value pair, $(t,x_n(t))$, though we will often abuse notation and talk of event $t$, when the agent and output signal value are understood or irrelevant.
A set of $N$ events $t_1,\ldots,t_N$, one per agent, is \emph{concurrent} if $|t_n-t_m|\leq 2\cskew$ for all $n,m \in [N]$. Thus, given the maximum skew, it is not possible to tell the temporal ordering of events inside a concurrent set, and our monitors must treat them as being possibly synchronous. 

The notion of consistent cut was defined for partially synchronous dense time signals in \cite{momtaz2023predicate} by extending the classical notion of consistent cut from distributed systems. It is easy to show that the following is an equivalent definition, which has the benefit of greater simplicity when describing our algorithms.
\begin{definition}[Consistent cut]
    Let $\dstrsig$ be a distributed signal over $N$ agents. 
    A {\emph cut} $C$ is a set of local timestamps of the form: $C =[0,t_1]\times [0,t_2]\times\ldots\times[0,t_N]$,
    where $t_n$ is measured on $\agent_n$'s local clock.
    The {\em frontier} $\front(C)$ of a cut $C$ is its vector of final timestamps: $$\front(C) \defeq (t_1,\ldots,t_N)\in \nnreals^N.$$
    A \emph{consistent cut}, or concut, is a cut whose frontier is concurrent, i.e. $|t_n-t_m|\leq 2\cskew$ for all $t_n,t_m$ elements of $\front(C)$.

    Finally, given concuts $C=[0,t_1] \times \ldots \times [0,t_N]$ and $C'=[0,t_1'] \times \ldots \times [0,t'_N]$, the partial order $\prec$ between concuts is given by $C \prec C'$ iff $t_n < t_n'$ for all $n$.
The non-strict version is $C \preceq C'$ iff either $C \prec C'$ or $C = C'$.
\end{definition}

We write $\boldsymbol{0}$ for the concut whose frontier is the 0 time at each agent. 
Note that for two different concuts to be ordered ($C \prec C'$), \textit{all} clocks need to have advanced from $C$ to $C'$: we don't allow some local clocks to progress while others stall.
We write $[C,C']$ for the set of cuts $C''$ s.t. $C \preceq C'' \preceq C'$. 

\subsection{Signal Temporal Logic (STL)}\label{sec:stl}
The system designer formalizes requirements in {\em Signal Temporal Logic}, or STL \cite{maler2004monitoring}, which is a common logic for CPS formal requirements.
Let \textsf{AP} be a set of \emph{atomic propositions}. To every $p\in AP$ is associated an $N$-ary function $f_p: \reals^N \mapsto \reals$. The syntax for STL is given by:
$$\stlfm{\varphi} \defeq \top \mid p \mid \lnot\stlfm{\varphi} \mid \stlfm{\varphi} \land \stlfm{\varphi} \mid \stlfm{\varphi} \until{[a,b]} \stlfm{\varphi}$$

STL combines the usual boolean constant True $\top$ and boolean operators (negation $\neg$ and conjunction $\land$) with the temporal Until operator $\varphi \until{[a,b]} \psi$, which means that $\varphi$ remains true at least until a moment in $[a,b]$ at which $\psi$ becomes true. 
Formally, let a \emph{trace} $\sigma \defeq (x_1, \dots, x_N)$ be a vector of $N$ continuous-time \textit{synchronous} output signals, one per agent. We think of $\sigma$ as a regular $N$-dimensional output signal whose values are measured against global time. 
The satisfaction of formula $\varphi$ by trace $\sigma$ at time $t$, written $(\sigma,t)\models \varphi$, is defined by:

{\centering
\begin{tabular}{ l c l }
    $(\sigma, t) \models \top$ & & \\
    $(\sigma, t) \models p$ & iff & $f_p(x_1(t), \dots, x_N(t)) \geq 0$ \\
    $(\sigma, t) \models \lnot\stlfm{\varphi}$ & iff & $(\sigma, t) \not\models \stlfm{\varphi}$ \\
    $(\sigma, t) \models \stlfm{\varphi} \land \stlfm{\psi}$ & iff & $(\sigma, t) \models \stlfm{\varphi}$ and $(\sigma, t) \models \stlfm{\psi}$ \\
    $(\sigma, t) \models \stlfm{\varphi} \until{[a,b]} \stlfm{\psi}$ & iff & $\exists t' \in [t + a, t + b]: (\sigma, t') \models \stlfm{\psi}$ and \\
    &&$\forall t'' \in [t, t']: (\sigma, t'') \models \stlfm{\varphi}$
\end{tabular}
\par}

For convenience we write $\sigma \models \stlfm{\varphi}$ for $(\sigma, 0) \models \stlfm{\varphi}$. Two additional temporal operators can be defined with $\until{[a,b]}$, $\lnot$, and $\top$: Eventually ($\eventually_{[a,b]} \varphi \defeq \top \until{[a, b]} \varphi$) and Always ($\always_{[a,b]} \varphi \defeq \lnot \eventually_{[a,b]} \lnot \varphi$).

%======================================================================================================
\section{Problem Formulation}
\label{sec:problem formulation}
As stated in the Introduction, we wish to allow the designer to write specifications in STL, without burdening them with the need to account for partial synchrony in the logic itself, e.g., by choosing on which local clock an interval is to be evaluated: this choice would be both arbitrary (why one clock and not another?) and wrong (the engineer means for the interval to measure global time, not local time). 
So we must first define what it means for a distributed signal to satisfy an STL formula. 
\textit{Retimings} \cite{momtaz2023predicate} are a theoretical construct that maps local clock values to a reference time. 

\begin{definition}[Retimings and distributed STL satisfaction]
\label{def:retimings and distributed stl sat}
    Fix $\cskew >0$.
    An \emph{agent retiming} for agent $n$ is a curve $\rho_n: \nnreals \mapsto \timeline$ s.t. $\rho_n(0) = 0$, $\rho_n$ is strictly increasing, and for all $t$ in $\nnreals$, $| t - \rho_n(t) | \leq \cskew$.

    A \emph{system retiming}, or simply \emph{retiming}, is a curve $\rho: \nnreals \mapsto \timeline^N$ where
    $$\rho(\omega) \defeq (\rho_1(\omega), \dots, \rho_N(\omega)).$$
    
    For readability we say the retiming $\rho$ uses agent retimings $\rho_1, \dots, \rho_N$ for its definition, retiming $\rho'$ uses agent retimings $\rho_1', \dots, \rho_N'$, retiming $\rho''$ uses $\rho_1'', \dots, \rho_N''$, etc. unless explicitly stated otherwise. The \emph{graph} of $\rho$ is $Gph(\rho) \defeq \{(\rho_1(\omega), \dots, \rho_N(\omega), \omega) \such \omega \in \timeline\}$; we place $\omega$ at the end of the tuple for simplicity when we connect retimings and concuts in a later section. Given a distributed signal $\dstrsig \defeq (x_n)_{n \in [N]}$, the \emph{synchronized signal} $\dstrsig_\rho$ is the trace $\sigma \defeq (x_1\circ \rho_1,\dots, x_N \circ \rho_N)$, where $\circ$ is the standard function composition operator.
    
    We say that $\dstrsig$ satisfies STL formula $\varphi$ at time $\omega$ iff there exists a retiming $\rho$ s.t. $ (\dstrsig_\rho,\omega) \models \varphi$.
\end{definition}

Similar retiming definitions have been described in prior works such as \cite{koll2023decentralized}, \cite{momtaz2023predicate}, and \cite{quesel2011crossing}.

\subsection{Why the STL Fragment?}
\label{sec:why distl}
Monitoring the full STL logic raises serious difficulties that we now detail. This justifies restricting attention to a fragment of STL, \emph{\distl}, defined in the next section.
Consider $\varphi \defeq \psi_1 \land \psi_2$ for general STL formulas $\psi_1,\psi_2$. Fix a time $\omega$.
Suppose that there exist retimings $\rho'$ and $\rho''$ with $\omega \in \nnreals$ s.t. $(\dstrsig_{\rho'}, \omega) \models \psi_1$ and $(\dstrsig_{\rho''}, \omega) \models \psi_2$. 
These retimings may not be the same. 
But to say that $(\dstrsig,\omega) \models \psi_1\land \psi_2$, there must exist the \emph{same} retiming $\rho$ where $(\dstrsig_\rho, \omega) \models \psi_1$ and $(\dstrsig_\rho, \omega) \models \psi_2$.
Thus we would need to finitely represent and keep track of {all} retimings that witness satisfaction in the subformulas to evaluate satisfaction of the overall formula.
This presents two challenges: first, creating a finite representation of retimings necessarily means restricting our retimings to some class of functions, e.g. piecewise linear increasing functions; and second, keeping track of \textit{all} possibly satisfying retimings across sub-formulas can very quickly become exorbitantly expensive, both computationally and memory-wise. 
\distl{} restricts the language to a fragment that does not require tracking of retimings, allowing us to sidestep these concerns.

While \distl{} allows for efficient monitoring in the presence of retimings, we do note a limitation of this fragment: preventing generalized conjunctions ($\psi_1 \land \psi_2$ for general formulas $\psi_1$ and $\psi_2$) prevents a formula designer from introducing a constraint to the formula $\psi_1$ by merely conjuncting an additional arbitrary formula $\psi_2$ to form $\psi_1 \land \psi_2$ -- in \distl, the formula $\psi_2$ is limited in what it can specify.

\subsection{The DiSTL Fragment} \label{sec:distl}
Our next move is to define the fragment \distl. 
A general formula of this fragment will be labeled $\stlsubfm$.

\begin{definition}[DiSTL]\label{def:stl gram} DiSTL is a fragment of STL with syntax:
    $$\stlsubfmatom \defeq \top \mid \bot \mid p_n \mid \lnot p_n \mid \stlsubfmatom \lor \stlsubfmatom \mid \stlsubfmatom \land \stlsubfmatom$$
    $$\stlsubfm \defeq \stlsubfmatom \mid \stlsubfm \lor \stlsubfm \mid \stlsubfm \land \stlsubfmatom \mid \eventually_{I} \stlsubfm \mid \always_{I} \stlsubfmatom \mid \stlsubfmatom \until{I} \stlsubfm$$
    where $I = [a,b]$ is an interval such as that attached to the Until in the STL syntax (\autoref{sec:stl}).

    In the semantics, $f_{p_n}(x_1(t),\dots,x_N(t)) = x_n(t) - \beta_n$ for some rational number $\beta_n$ (see \autoref{sec:stl}). 
\end{definition}
For monitoring purposes satisfaction must be defined in terms of what can be observed, namely, cuts. So how do we go from satisfaction in terms of retimings (\autoref{def:retimings and distributed stl sat}) to satisfaction in terms of cuts?
We make three critical observations:
\begin{enumerate}
    \item The first observation is that for a synchronous trace $\sigma$, the index of evaluation is a single (global) moment $\gclk$. 
For a distributed signal, it must be a consistent cut's frontier, i.e. a concurrent set of local moments $(t_k)_{k \in [N]}$, since a frontier represents a \textit{potential} synchronous value of the signal.
    \item The second observation is that given two global moments $\gclk < \gclk'$, time increases in one way between them (namely it increases at a rate of 1), so it is possible to say things like `for all $\gclk''$ in $[\gclk,\gclk']$' unambiguously.
In a distributed signal, given two cuts $C$ and $C'$ with $C \prec C'$, time can evolve between their frontiers in many ways, since the local clocks can increase at different and time-varying rates. 
    \item The third observation is that because of the skew constraint on agent retimings (\autoref{def:retimings and distributed stl sat}), given a concut frontier $(t_k)_{k \in [N]}$, the tuple \\ $(t_1, \dots, t_N, \rho^{-1}(t_1,\ldots,t_N))$ is also a concut frontier but over $N+1$ agents. The last entry merely acts as (one possible) reference timestamp whose only purpose is to ensure that the local timestamps have bounded drift from each other by forcing them to be $\cskew$-away from the reference.
    Thus two different vectors $(\timeVec,\omega)$ and $(\timeVec,\omega')$ in $\nnreals^N \times \nnreals$, sharing the same first $N$ values $\timeVec$, represent two different valid retimings of the $N$ local clocks (namely, $\rho$ and $\rho'$ where $\rho(\omega) = \timeVec$ and $\rho'(\omega') = \timeVec$).
In this paper, it will be convenient for us to always treat concut frontiers as being elements of $\nnreals^{N+1}$ that obey the $\cskew$-constraint. Any concut frontier is henceforth an element of the set:
\[\Delta_\cskew^{N+1} \defeq \{v \in \nnreals^{N+1} \such \forall n: |v[n]-v[N+1]|\leq \cskew\}.\]
We connect this to retimings by saying that retimings are curves existing in this set -- for a retiming $\rho$, $Gph(\rho) \subset \Delta_{\cskew}^{N+1}$.
\end{enumerate}

\textit{Remark.}
The notation is simplified in two ways: 
(1) we will always only work with the frontier and not the whole cut. Therefore in the remainder of this work when we refer to a cut we will be referring to its frontier, notated as the same symbol $C$, which is now a vector in $\eccut$.
(2) the monitored distributed signal $\dstrsig$ is held fixed, so we drop it from the notation wherever possible, such as in the following definition.

We are now in a position to generalize satisfaction signals to the partially synchronous setting. 
In classical (synchronous) systems, the satisfaction signal of a formula is simply a function $\lambda_\varphi:\timeline \rightarrow \{\top,\bot\}$ s.t. $\lambda_\varphi(t)=\top$ iff $x,t\models \varphi$. In the current setting, there is no global time. The satisfaction signal must be defined as a function of concuts, and involve a choice of retiming.

\begin{definition}[Satisfaction Signal]\label{def:sat sig}
    Given a concut $C$, let $P_C$ be the set of retimings which pass through $C$, i.e. s.t. $C \in Gph(\rho)$. Let $\beta_n$ be a rational number.
    The {\em satisfaction signal}, or satsignal, $\lambda_{\stlsubfm}: \eccut \rightarrow \{\top,\bot\}$ is a function, parameterized by $\stlsubfm$, defined by: 
    \begin{align*}
        \lambda_\top(C) &\defeq \top,
        ~ \lambda_\bot(C) \defeq \bot,
        ~ \lambda_{p_n}(C) \defeq x_n(C[n]) \geq \beta_n,\\ 
       \lambda_{\lnot p_n}(C) &\defeq x_n(C[n]) < \beta_n, \\
        \lambda_{\stlsubfm \lor \stlfm{\psi}^\tau}(C) &\defeq \lambda_{\stlsubfm}(C) \lor \lambda_{\stlfm{\psi}^\tau}(C),\quad 
        \lambda_{\stlsubfm \land \stlfm{\psi}^a}(C) \defeq \lambda_{\stlsubfm}(C) \land \lambda_{\stlfm{\psi}^a}(C) \\
        \lambda_{\eventually_I \stlsubfm}(C) &=\top \text{ iff } \exists C' \succeq C \text{ with } C'[N+1] \in C[N+1] \oplus I: \lambda_{\stlsubfm}(C') \\
        \lambda_{\always_I \stlsubfmatom}(C) &=\top \text{ iff } \exists \rho \in P_C: \forall C' \in Gph(\rho) \text{ with }\\
        &C'[N+1] \in C[N+1] \oplus I,~ \lambda_{\stlsubfmatom}(C') \\
        \lambda_{\stlsubfmatom \until{I} \stlfm{\psi}^\tau}(C) &=\top \text{ iff } \exists C' \succeq C \text{ with } C'[N+1] \in C[N+1] \oplus I: \lambda_{\stlfm{\psi}^\tau}(C') \\
        & \text{ and } 
        \exists \rho \in P_C \cap P_{C'} \text{ s.t. } \forall C'' \in Gph(\rho) \cap [C, C']: \lambda_{\stlsubfmatom}(C'').
    \end{align*}
\end{definition}

\begin{theorem}\label{thm:valid sat sig}
    Given DisTL formula $\stlsubfm$ and concut $C \in \eccut$, let $\omega \defeq C[N+1]$. 
    Then $\dstrsig$ satisfies $\stlsubfm$ iff $\lambda_{\stlsubfm}(C) = \top$.
\end{theorem}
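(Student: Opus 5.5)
We read the statement as follows: $\dstrsig$ satisfies $\stlsubfm$ at $\omega = C[N+1]$ \emph{through} $C$, i.e.\ there is a retiming $\rho \in P_C$ with $(\dstrsig_\rho,\omega)\models\stlsubfm$, iff $\lambda_{\stlsubfm}(C)=\top$. This is the reading that makes the quantity $\omega$ in the statement meaningful. The plan is a structural induction on the \distl{} grammar, proving this equivalence simultaneously for all concuts $C \in \eccut$.

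Before the induction we establish two auxiliary facts.
\begin{enumerate}
\item \emph{Locality.} For every STL formula, whether $(\dstrsig_\rho,\omega)\models\varphi$ holds depends only on the restriction of $\rho$ to $[\omega,\infty)$. Only future-time operators occur, so this is an easy induction.
\item \emph{Splicing.} Let $\rho$ and $\rho'$ be retimings, and let $C\in Gph(\rho)$, $C'\in Gph(\rho')$ with $C \preceq C'$. Then there is a retiming $\rho''$ with the following properties: it coincides with $\rho$ on $[0,C[N+1]]$, it coincides with $\rho'$ on $[C'[N+1],\infty)$, and in between it is the straight segment from $C$ to $C'$.
\end{enumerate}
For splicing, we use that $\Delta_\cskew^{N+1}$ is convex, so the segment stays in it. When $C \prec C'$, every coordinate increases strictly along the segment, so each $\rho''_n$ is strictly increasing. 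The degenerate case $C=C'$ is trivial. As a special case, any concut lies on some retiming, for instance one obtained by splicing the diagonal $\omega\mapsto(\omega,\ldots,\omega)$ through it.

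We then go through the grammar case by case.
\begin{itemize}
\item \emph{Atoms} $\top,\bot,p_n,\lnot p_n$. Here $(\dstrsig_\rho,\omega)\models p_n$ iff $x_n(\rho_n(\omega))\geq\beta_n$ iff $x_n(C[n])\geq\beta_n$ for every $\rho\in P_C$. So satisfaction is retiming-independent, and $P_C\neq\emptyset$ by splicing.
\item \emph{Propositional combinations inside} $\stlsubfmatom$. These are immediate from retiming-independence.
\item \emph{Disjunction} $\stlsubfm\lor\stlfm{\psi}^\tau$. The existential quantifier over $\rho$ distributes over $\lor$.
\item \emph{Conjunction} $\stlsubfm\land\stlfm{\psi}^a$. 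The right conjunct holds under every $\rho\in P_C$ once it holds under one. Hence a single witness for $\stlsubfm$ serves for both conjuncts. This is exactly where the syntactic restriction of \distl{} is used.
\item \emph{Always} $\always_I\stlsubfmatom$. The satsignal clause already quantifies over a retiming in $P_C$ and checks $\lambda_{\stlsubfmatom}$ along its graph. By retiming-independence of $\stlsubfmatom$, this is literally $(\dstrsig_\rho,\omega)\models\always_I\stlsubfmatom$.
\end{itemize}

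The substantive cases are Eventually and Until. For $\eventually_I\stlsubfm$, the ($\Rightarrow$) direction of the STL semantics goes as follows. If $(\dstrsig_\rho,\omega)\models\eventually_I\stlsubfm$, pick a witness $\omega'\in\omega\oplus I$ and set $C'\defeq(\rho(\omega'),\omega')$. Then $C'\succeq C$ because $\rho$ is strictly increasing, and the induction hypothesis at $C'$ gives $\lambda_{\stlsubfm}(C')$. For the ($\Leftarrow$) direction, the induction hypothesis gives $\rho'\in P_{C'}$ satisfying $\stlsubfm$ at $C'[N+1]$. We splice an arbitrary $\rho\in P_C$, the segment $[C,C']$, and $\rho'$. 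By locality the spliced retiming still satisfies $\stlsubfm$ at $C'[N+1]$, hence satisfies $\eventually_I\stlsubfm$ at $\omega$.

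Until is handled the same way, with two differences. The clause already supplies a retiming in $P_C\cap P_{C'}$ on which $\stlsubfmatom$ holds over $[C,C']$. We splice that retiming's portion up to $C'$ with the witness $\rho'$ for $\stlfm{\psi}^\tau$ after $C'$. Locality again guarantees that neither the $\stlsubfmatom$-part nor the $\stlfm{\psi}^\tau$-part is disturbed.

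I expect the main obstacle to be the splicing lemma and its interaction with the induction. Three issues need care:
\begin{itemize}
\item strict monotonicity when $C'[N+1]=C[N+1]$ (i.e.\ $0\in I$) or when some but not all coordinates coincide;
\item checking that the partial order $\succeq$ in the satsignal matches the points reachable along a strictly increasing retiming;
\item the fact that the induction hypothesis must be applied at cuts $C'$ other than the original $C$.
\end{itemize}
The last point is why the induction is over all concuts simultaneously. The degenerate cases are handled by noting that $C'\succeq C$ with $C'\neq C$ forces strict increase in every coordinate, and that $C'=C$ needs no segment at all.
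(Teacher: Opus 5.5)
Your route is the paper's. The ingredients match one for one:
\begin{itemize}
\item retiming-independence of the $\stlsubfmatom$ fragment (Lemma~\ref{lem:atom no retiming});
\item threading a retiming through an earlier cut by a straight segment, where your splicing lemma slightly generalizes the construction in Lemma~\ref{lem:irr retiming cuts};
\item future-time locality;
\item for Until, gluing the $\stlsubfmatom$-witness before $C'[N+1]$ to the $\stlfm{\psi}^\tau$-witness after it, which is exactly the paper's $\rho''$.
\end{itemize}
Making the induction over all concuts explicit and handling Always directly only tidies what the paper leaves implicit.

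\textbf{The one step that fails.} You claim that every concut lies on some retiming. You use this to get $P_C\neq\emptyset$ in the atom case and to pick ``an arbitrary $\rho\in P_C$'' for Eventually. The paper's proof silently shares this step.
\begin{itemize}
\item Each agent retiming has $\rho_n(0)=0$ and is strictly increasing. So every point of $Gph(\rho)$ is either $\boldsymbol{0}$ or has all $N+1$ coordinates strictly positive.
\item Hence a concut such as $C=(0,\ldots,0,\cskew)\in\eccut$ has $P_C=\emptyset$, yet $\lambda_\top(C)=\top$. The equivalence is false at this $C$.
\item Your diagonal splice cannot reach such a $C$, because the segment from $\boldsymbol{0}$ to $C$ is not strictly increasing in the zero coordinates.
\end{itemize}

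\textbf{The fix.} State and prove the theorem only for concuts with $P_C\neq\emptyset$, namely $C=\boldsymbol{0}$ or $C\in\eccut\cap\preals^{N+1}$.
\begin{itemize}
\item For these $C$ a retiming through $C$ exists: take the segment from $\boldsymbol{0}$ to $C$, then continue with slope $1$.
\item The class is closed under passing to any $C'\succeq C$, since $C'\neq C$ forces strict increase in every coordinate.
\end{itemize}
So every cut at which you invoke the induction hypothesis stays in the class, and the rest of your argument goes through unchanged.
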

We can finally formulate:
\begin{mdframed}
\textbf{Main Problem}
    Given a distributed signal $\dstrsig$ and a DiSTL formula $\stlsubfm$, find \textit{all} concuts $C$ of $\dstrsig$ s.t. $\lambda_{\stlsubfm}(C)=\top$.  
\end{mdframed}

Here we note that while we aim to identify the concuts where the distributed signal \emph{satisfies} the formula, this can be considered equivalent to the CPS breaching its correctness requirements -- the correctness requirements are the negation of the provided formula. We aim to find all possible synchronizations of the system where the CPS does not follow its requirements.

%============================================================================================================
\section{Offline Monitoring: Characterization and Algorithm}
\label{sec:offline mon}
This section gives a characterization of the satdomain of \distl{} formulas, and develops an offline monitor on that basis. 
An offline monitor has access to the entire distributed signal and returns the support of the satsignal, that is, the set of concut frontiers $C$ where the satsignal evaluates to True.
We call this set of frontiers a {\em satdomain}, and denote it $\satdom{\stlsubfm}$:
\[\satdom{\stlsubfm} \defeq \{C \in\eccut \such \lambda_{\stlsubfm}(C) = \top\}\]

\subsection{Characterizing the Satdomain}
\label{sec:characterizing the satdomain}
The following lifting operator will be needed in what follows. It takes in a closed real interval $I  = [a,b]$ (which appears on a DiSTL temporal operator) and lifts it to an $(N+1)$-dimensional interval. 
        \[\ilift{I} \defeq \begin{cases}
            \preals^N \times I & \text{ if } 0 \notin I \\
            \left(\preals^N \times (I \setminus \{0\})\right) \cup {\boldsymbol{0}} & \text{ otherwise.}  
        \end{cases}\]

This lift is meant to capture what should happen to local clocks if the formula contains a temporal operator decorated by interval $I$.
Recall that $I$, as in $\eventually_I p$, requires that time must shift by some $c \in I$ before $p$ is usefully True.
If $I$ does not start at 0 then the $N$ local clocks, whose values appear in the first $N$ dimensions of $\ilift{I}$, must shift by a non-0 amount (as required by \autoref{ass:partialsync}), thus we cross $I$ with $\preals^N$.
If $I=[0,a]$, the possibility that the shift is 0 implies all other local clocks have not moved either, thus we make sure the all-0 vector is in the lifted $N+1$-dimensional interval. On the other hand the possibility that the shift is positive implies $I\setminus \{0\}$ is crossed, as before, with $\preals^N$.

The following theorem gives a recursive way to compute the satdomain of a DiSTL formula, excluding the Until case.

\begin{theorem}\label{thm:satdomain captures satisfaction}
\label{thm:satdomain equalities}
    The following equalities hold: 
    \begin{align*}
        \satdom{\top} &= \eccut~,~\satdom{\bot} = \emptyset,~\quad~
        \satdom{p_n} = \{C \in \satdom{\top} \mid x_n(C[n]) \geq \beta_n\}
        \\
        \satdom{\stlsubfm \lor \stlfm{\psi}^\tau} &= \satdom{\stlsubfm} \cup \satdom{\stlfm{\psi}^\tau},~ \quad~\satdom{\stlsubfm \land \stlfm{\psi}^a} = \satdom{\stlsubfm} \cap \satdom{\stlfm{\psi}^a} \\
        \satdom{\eventually_I \stlsubfm} &= \satdom{\stlsubfm} \ominus \ilift{I},
        \\
        \satdom{\always_{[b,c]} \stlsubfmatom} &= \satdom{\eventually_{[b,b]}\left(\stlsubfmatom \until{[c-b,c-b]} \top\right)}
    \end{align*}
\end{theorem}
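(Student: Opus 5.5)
The plan is to prove each equality directly from \autoref{def:sat sig} and the definition $\satdom{\stlsubfm} = \{C\in\eccut \such \lambda_{\stlsubfm}(C)=\top\}$, treating the easy cases first and the two temporal cases separately.

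\emph{Non-temporal cases.} These are immediate. $\lambda_\top\equiv\top$ and $\lambda_\bot\equiv\bot$ give $\satdom{\top}=\eccut$ and $\satdom{\bot}=\emptyset$. For $p_n$ we unfold $\lambda_{p_n}(C)$ as $x_n(C[n])\ge\beta_n$, and similarly for $\lnot p_n$. Disjunction and conjunction are pointwise in \autoref{def:sat sig}, so they become union and intersection of the satdomains.

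\emph{Eventually.} We prove $\satdom{\eventually_I\stlsubfm}=\satdom{\stlsubfm}\ominus\ilift{I}$ by double inclusion.
\begin{itemize}
\item[($\subseteq$)] Suppose $\lambda_{\eventually_I\stlsubfm}(C)=\top$. Then there is $C'\succeq C$ with $C'[N+1]-C[N+1]\in I$ and $C'\in\satdom{\stlsubfm}$. Set $d\defeq C'-C$.
  \begin{itemize}
  \item If $C'=C$, then $d=\boldsymbol 0$ and $0\in I$, so $d\in\ilift I$.
  \item Otherwise $C\prec C'$, and we must read $\prec$ on the $(N+1)$-vectors as strict increase in every coordinate. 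Then $d\in\preals^N$ in its first $N$ coordinates, and $d[N+1]\in I\setminus\{0\}$.
  \end{itemize}
  In both cases $C=C'-d$ with $C\in\nnreals^{N+1}$, so $C\in\satdom{\stlsubfm}\ominus\ilift I$.
\item[($\supseteq$)] Write $C=C'-d$ with $C'\in\satdom{\stlsubfm}$, $d\in\ilift I$ and $C\in\nnreals^{N+1}$. The case split on $d$ shows that either $C=C'$ or $C\prec C'$, and that $C'[N+1]\in C[N+1]\oplus I$.
\end{itemize}
The subtle point is that $\ominus$ only intersects with $\nnreals^{N+1}$, so we must also show $C\in\eccut$. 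For this I would check whether the paper intends $\ominus$ to be taken relative to $\eccut$. Failing that, I would restrict to concuts, since all satdomains are subsets of $\eccut$ and the equality is to be read inside $\eccut$. The note after $\ilift{\cdot}$, that local clocks must advance whenever reference time does, justifies the strict positivity of the first $N$ coordinates.

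\emph{Always.} The main obstacle is the equality $\satdom{\always_{[b,c]}\stlsubfmatom}=\satdom{\eventually_{[b,b]}(\stlsubfmatom\until{[c-b,c-b]}\top)}$.

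By the Eventually clause, the right side holds at $C$ iff there is $C_1\succeq C$ with $C_1[N+1]=C[N+1]+b$ such that the Until clause holds at $C_1$. The Until clause asks for some $C_2\succeq C_1$ with $C_2[N+1]=C_1[N+1]+(c-b)$; its $\top$ requirement is vacuous. It also asks for a retiming $\rho\in P_{C_1}\cap P_{C_2}$ along which $\stlsubfmatom$ holds on $Gph(\rho)\cap[C_1,C_2]$.

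The left side asks for $\rho\in P_C$ with $\stlsubfmatom$ holding at every $C'\in Gph(\rho)$ whose last coordinate lies in $C[N+1]+[b,c]$.

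\emph{(LHS$\Rightarrow$RHS).} Take $C_1=\rho(C[N+1]+b)$ and $C_2=\rho(C[N+1]+c)$, where $\rho(\omega)$ denotes the graph point with last coordinate $\omega$. Strict monotonicity of $\rho$ gives $C\preceq C_1\preceq C_2$. The graph points in $[C_1,C_2]$ are exactly those with last coordinate in $C[N+1]+[b,c]$. The retiming $\rho$ lies in $P_{C_1}\cap P_{C_2}$ and witnesses the Until clause.

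\emph{(RHS$\Rightarrow$LHS).} This is where a construction is needed. The Until witness $\rho$ passes through $C_1$ and $C_2$ but need not pass through $C$. I would splice retimings: keep $\rho$ on $[C_1[N+1],\infty)$ and replace it on $[0,C_1[N+1]]$ by a strictly increasing curve inside $\Delta^{N+1}_\cskew$ through $\boldsymbol0$, $C$ and $C_1$. Piecewise-linear segments work, because $\Delta^{N+1}_\cskew$ is convex and $\boldsymbol0\preceq C\preceq C_1$ with the ordering strict in all coordinates. Each segment then lies in $\Delta^{N+1}_\cskew$ and is strictly increasing in every coordinate. The spliced curve is a retiming in $P_C$, and its graph over reference times $C[N+1]+[b,c]$ coincides with that of $\rho$ on $[C_1,C_2]$, so the Always clause holds. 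The degenerate cases $C=\boldsymbol0$ and $C_1=C$ (when $b=0$) must be handled separately.
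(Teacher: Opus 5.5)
Your non-temporal and Eventually cases follow the paper's proof. You use the same case split on whether $d=\boldsymbol{0}$ or $d\in\preals^N\times(I\setminus\{0\})$, and your pointwise treatment of $\lor,\land$ is cleaner than the paper's detour through \autoref{thm:valid sat sig}.

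For Always the organization differs. The paper proves (A) $\lambda_{\always_{[b,c]}\stlsubfmatom}=\lambda_{\eventually_{[b,b]}\always_{[0,c-b]}\stlsubfmatom}$ and (B) $\lambda_{\always_{[0,c-b]}\stlsubfmatom}=\lambda_{\stlsubfmatom\until{[c-b,c-b]}\top}$, then substitutes (B) under the Eventually. You unfold the right-hand side in one pass. The content is the same: read off the witness retiming at reference times $C[N+1]+b$ and $C[N+1]+c$. Your version, however, isolates the one nontrivial step, which is re-anchoring the Until witness so that it passes through $C$. The paper's chain (A) performs this step silently, when $\exists\rho\in P_C$ becomes $\exists\rho\in P$ and then $\exists\rho\in P_{C''}$. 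Your piecewise-linear splice through $\boldsymbol{0}$, $C$, $C_1$ is exactly the construction in the proof of Lemma~\ref{lem:irr retiming cuts}, which is what that line needs. The paper's factoring yields two reusable identities; yours is shorter and makes the real step visible.

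There are two precision points.

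\emph{The $\ominus$ issue.} Your worry is justified, and the paper does not address it. Take $N=1$, $\cskew=1$, $I=[1,2]$, $C'=(5,5)\in\satdom{\top}$ and $d=(3,1)\in\ilift{I}$. Then $C'-d=(2,4)$ lies in $\nnreals^2$ but not in $\eccut$. So the Eventually identity only holds as $\satdom{\eventually_I\stlsubfm}=(\satdom{\stlsubfm}\ominus\ilift{I})\cap\eccut$. This is how \autoref{fig:eventually example} and the final $Y\cap\satdom{\top}$ of \autoref{algo:approx until satcuts} treat it.

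\emph{The splice hypothesis.} Your splice presupposes $C=\boldsymbol{0}$ or $\boldsymbol{0}\prec C$, and that is not automatic. $\eccut$ contains concuts with some but not all coordinates zero, e.g.\ $C=(0.5,0)$ when $N=1$, $\cskew=1$. Every agent retiming has $\rho_n(0)=0$ and is strictly increasing, so $P_C=\emptyset$ for such $C$, giving $\lambda_{\always_{[1,2]}\top}(C)=\bot$. Yet $\lambda_{\eventually_{[1,1]}(\top\until{[1,1]}\top)}(C)=\top$, witnessed by $C_1=(1,1)$, $C_2=(2,2)$ and the identity retiming.

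Hence for $b>0$ the Always identity is genuinely false on these boundary concuts, and no splice can rescue it. Your argument is sound exactly on the concuts with $P_C\neq\emptyset$ (equivalently $C=\boldsymbol{0}$ or $\boldsymbol{0}\prec C$). You should state that restriction explicitly, as you did for $\ominus$. The paper shares this blind spot: the proof of Lemma~\ref{lem:irr retiming cuts} divides by $C[N+1]$ and simply assumes $\boldsymbol{0}\prec C$.
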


Now the Until is treated.
The set $\satdom{\stlsubfm\until{I}\psi^\tau}$ contains all concuts $C$ that 
\begin{enumerate}[label=(U\arabic*)]
    \item are in $\satdom{\stlsubfm}$,
    \item at which starts a strictly increasing curve $\gamma: [0,1] \rightarrow \eccut$ which ends at some $C'$ in $\satdom{\psi^\tau}$ s.t.
    \item this curve lives entirely in $\satdom{\stlsubfm}$ at least until it reaches $C'$ and
    \item the difference between $C[N+1]$ and $C'[N+1]$ is in $I$. 
\end{enumerate}
The curve is another representation of a retiming: by increasing from its start at $\gamma(0)=C$ to its end at $\gamma(1)=C'$ and living entirely in $\satdom{\stlsubfm}$ in-between, it witnesses that it is possible for the concurrency to resolve itself in such a way that $\stlsubfm$ is satisfied until $\psi^\tau$ is satisfied. 
This characterization of the satdomain of an Until operator will be used in the following section, in which we introduce the offline \distl{} monitor for partially synchronous distributed signals.

\subsection{The Offline Monitor}
\label{sec:monitor without until}
We now introduce our offline monitor, which takes in a \distl{} formula $\stlsubfm$, a distributed signal $\dstrsig$, and recursively computes the satdomain $\satdom{\stlsubfm}$.

For the base cases of $\top$ and an atom $p_n$, and for all operators except the Until, the monitor simply performs the operations indicated in \autoref{thm:satdomain captures satisfaction}. These are straightforward polytope manipulations.
 
\begin{lemma}
    \label{thm:satdomains are NCPs}
    The sets $\satdom{\top}$, $\satdom{p_n}$ and $\satdom{\neg p_n}$ are non-convex polytopes (NCPs). 
    Further, if $\satdom{\stlsubfm}$, $\satdom{\stlfm{\varphi}^a}$ and $\satdom{\stlfm{\psi}^\tau}$ are NCPs, then so are the sets $\satdom{\stlsubfm \lor \stlfm{\psi}^\tau}$, $\satdom{\stlsubfm \land \stlfm{\varphi}^a}$, $\satdom{\eventually_I \stlsubfm}$ and $\satdom{\always_{I} \stlsubfmatom}$.
\end{lemma}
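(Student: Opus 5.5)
Let me plan: base cases are established first, followed by closure under each operator, using \autoref{thm:satdomain captures satisfaction} to reduce each satdomain to explicit set operations. Throughout I read ``non-convex polytope'' in the generalized sense needed here, namely a finite union of polytopes. Since the signals are only observed over a bounded horizon, I intersect with a bounded box $[0,T]^{N+1}$; the paper's definition of NCP also asks for connectedness, and I address this separately at the end.

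\emph{Base cases.} $\satdom{\top}=\eccut$ is cut out of $\nnreals^{N+1}$ by the $2N$ linear inequalities $v[n]-v[N+1]\le\cskew$ and $v[N+1]-v[n]\le\cskew$. Together with nonnegativity and the horizon bound, this gives a single convex polytope, which is also connected. For $p_n$, \autoref{thm:satdomain captures satisfaction} gives $\satdom{p_n}=\{C\in\eccut \such x_n(C[n])\ge\beta_n\}$. Because $x_n$ is right-continuous, left-limited and non-Zeno, on $[0,T]$ it has finitely many discontinuities. The key step is to argue that the set $\{t \such x_n(t)\ge\beta_n\}$ is a finite union of intervals $J_1,\dots,J_k$. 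Granting this, $\satdom{p_n}=\bigcup_j \left(\eccut\cap\{v \such v[n]\in J_j\}\right)$, and each piece is a polytope, modulo open or closed endpoints, which I would absorb by allowing half-open faces. The set $\satdom{\neg p_n}$ is handled by the complementary intervals.

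\emph{Operators.} Disjunction is a union of finite unions, so closure is immediate. Conjunction uses $(\bigcup_i P_i)\cap(\bigcup_j Q_j)=\bigcup_{i,j}(P_i\cap Q_j)$, and the intersection of two polytopes is a polytope. For $\eventually_I$, I use $\satdom{\stlsubfm}\ominus\ilift{I}$ and distribute the Minkowski difference over the union, so that it suffices to treat $P\ominus \ilift{I}$ for a single polytope $P$. Here $\ilift{I}$ is the product $\preals^N\times I$, possibly together with the point $\boldsymbol{0}$, and both are (unbounded) polyhedra. The Minkowski sum of a polytope and a polyhedron is a polyhedron, and intersecting with $\nnreals^{N+1}$ and the horizon box restores boundedness. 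The extra point $\boldsymbol 0$ contributes $P$ itself as one more piece. For $\always_{[b,c]}\stlsubfmatom$, I apply the rewriting in \autoref{thm:satdomain captures satisfaction} to $\eventually_{[b,b]}(\stlsubfmatom\until{[c-b,c-b]}\top)$. This reduces the task to showing that the Until with a punctual interval and right operand $\top$, over the atomic satdomain, yields an NCP. I would do this by characterizing it through (U1)--(U4): the set of $C\in\satdom{\stlsubfmatom}$ from which a monotone curve in $\satdom{\stlsubfmatom}$ reaches height $C[N+1]+(c-b)$.

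\emph{Main obstacle.} The hardest step is this last reachability set, together with connectedness. For the reachability set, I would exploit the fact that $\satdom{\stlsubfmatom}$ is a finite union of boxes intersected with $\eccut$, since atoms constrain single coordinates. Monotone reachability inside a finite union of such boxes can then be computed by a finite backward fixpoint over the box adjacency graph, each step being a polyhedral projection and intersection. This keeps the result a finite union of polytopes. Connectedness need not be preserved, for example under $\lor$. If the statement's definition is read strictly, I would therefore prove the result for finite unions of polytopes and note that each connected component is an NCP in the paper's sense.
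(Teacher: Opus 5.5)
For $\top$, $p_n$, $\neg p_n$, $\lor$, $\land$ and $\eventually_I$ your argument is the paper's: reduce to finite unions of convex pieces, then distribute union, intersection and $\ominus$ over them. Your reading of ``NCP'' as a finite union of (possibly half-open) polytopes, with a horizon to force boundedness, is also what the paper actually proves. Its appendix proof silently switches to ``UNCPs'', since $\satdom{\top}=\eccut$ is unbounded and $\satdom{p_n}$ is typically disconnected.

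The genuine difference is the $\always_{[b,c]}$ case. The paper dismisses it as ``equivalent to a combination of other cases we've shown''. But the rewriting $\eventually_{[b,b]}(\stlsubfmatom\until{[c-b,c-b]}\top)$ contains an Until, which this lemma does not treat. For the Until the paper only shows that the \emph{outputs} of \moveBack{} are NCPs, not the exact satdomain. Your argument fills this gap: $\satdom{\stlsubfmatom}$ is $\eccut$ intersected with a finite union of boxes (atoms constrain single coordinates), and the punctual Until is a monotone-reachability set through these finitely many convex cells. The paper's route is shorter but leans on an unproved case. Yours is self-contained, and as a by-product shows that the exact satdomain of a punctual Until over an atomic left operand is polyhedral.

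To make your ``finite backward fixpoint'' rigorous, state the two facts it rests on:
\begin{enumerate}
\item A strictly increasing curve never needs to re-enter a convex cell. If it visits $y\prec y'$ in the same cell, replace the arc between them by the segment from $y$ to $y'$. So only finitely many repetition-free cell sequences matter.
\item The exact delay must be tracked. For example, for each sequence write the relation $\{(C,C')\}$ as the projection of a polyhedron in the intermediate crossing points, impose $C'[N+1]-C[N+1]=c-b$, and project onto $C$. Fourier--Motzkin elimination keeps this polyhedral even with strict inequalities, though some care with closures is needed at half-open faces.
\end{enumerate}

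The step you explicitly grant, however, is not provable from the stated hypotheses. \autoref{def:dstr signal} only bounds the number of \emph{discontinuities}. A continuous signal such as $x(t)=t\sin(1/t)$ with $x(0)=0$ is right-continuous, left-limited and non-Zeno, yet changes sign infinitely often on $[0,1]$. So $\{t \such x_n(t)\ge\beta_n\}$, and hence $\satdom{p_n}$, can be an infinite union of pieces. You need an extra finite-variability hypothesis, e.g.\ piecewise-linear signals as in the experiments, or finitely many $\beta_n$-crossings on bounded intervals. The paper's proof has the same hole when it passes from ``a disjoint union of intervals'' to ``a union of polytopes''. So this is a missing hypothesis in the lemma rather than a defect peculiar to your approach, but your box-based $\always$ argument depends on it too.
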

\begin{proof}[Proof sketch]
The base cases are trivial. All other operations (namely, unions, intersections, and shifts) produce non-convex polytopes from non-convex polytope inputs.
\end{proof}

\begin{figure}[ht]
    \centering
    \input{figs/or_example}
    \caption{Example of computing $\satdom{\stlsubfm \lor \stlfm{\psi}^\tau}$ in the space of cuts. The $N+1$ dimension is shown on the x-axis, and the y-axis represents the other dimensions.}
    \label{fig:or example}
\end{figure}

\begin{figure}[!ht]
    \centering
    \input{figs/eventually_example}
    \caption{The satdomain $\satdom{\eventually_{[1,2]} \stlsubfm}$ in the space of cuts. The diagonal bands delimit $\eccut$. The operation $\satdom{\stlsubfm} \ominus (\preals \times [1,2])$ produces the set in the right diagram. Boundaries drawn with dashed lines are not included in the set. This is due to the use of $\preals$ in the operation.}
    \label{fig:eventually example}
\end{figure}

\begin{example}\label{ex:or}
\autoref{fig:or example} shows the construction of $\satdom{\stlsubfmatom \lor \psi^a}$, which is simply the union of the two constituent satdomains.
\autoref{fig:eventually example} illustrates deriving $\satdom{\eventually_{[1,2]} \stlsubfm}$ from $\satdom{\stlsubfm}$. The $N+1^{st}$ dimension is shown along the $x$-axis. 
For every cut $C$ in the left diagram, there exists a $C'$ in the right diagram which is offset by the interval $[1,2]$ and is strictly greater than $C$. 
\end{example}

\paragraph{The Until operator}
To process an Until operator, we introduce two algorithms, \moveBackIn{} and \moveBackOut.
These produce an inner and an outer approximation, respectively, of $\satdom{\stlsubfmatom \until{I} \stlfm{\psi}^\tau}$, given the input domains $\satdom{\stlsubfmatom}$ and $\satdom{\stlfm{\psi}^\tau}$.

Algorithm \moveBackOut{} is given in \autoref{algo:approx until satcuts}. 
Intuitively, it partitions the polytopes of $\satdom{\stlsubfm}$ and $\satdom{\psi^\tau}$ into axis-aligned boxes s.t. it is trivial to move from one box to a neighboring box that is strictly above it (to track time increase). The main cost is due to computing these partitions and refining them.
Because the input domains' polytopes cannot necessarily be partitioned exactly into boxes, \moveBackOut{} over-approximates them with bounding boxes where needed. This is the source of the over-approximation.
An example of its operation is given in \autoref{fig:until}.

\moveBackIn{} works just like \moveBackOut{} but with one modification: on lines 2--7, rather than find bounding boxes, it packs the non-convex polytopes with inscribed boxes of some size, thus yielding an inner approximation.

\begin{figure*}[!ht]
    \centering
    \input{figs/until_example}
    \caption{An example of  \moveBackOut$(\stlsubfmatom \until{[2,4]} \stlfm{\psi}^\tau)$. The input $\satdom{\stlsubfmatom}$ consists of the  non-convex polytope in light blue and the input $\satdom{\stlfm{\psi}^\tau}$ is the salmon non-convex polytope.  \moveBackOut{} computes the bounding boxes for each convex polytope in $\satdom{\stlsubfmatom} \cup \satdom{\stlfm{\psi}^\tau}$; these boxes are shown by thick dashed lines (\autoref{algo:approx until satcuts} lines 2--7). Next, the bounding boxes are partitioned from each vertex, shown with dashed lines (line 6). The partitioned bounding boxes for $\satdom{\stlsubfmatom}$ are then refined by intersections with the partitioned bounding boxes for $\satdom{\stlfm{\psi}^\tau}$, shown with dotted lines (lines 10--12). We identify the partitioned blocks in the intersection between the two sets, in this figure a single block $S'$ in dashed purple (line 10). The set $G$ is determined by offsetting $S'$ horizontally by the interval $[2,4]$ and extending downward to produce $G$ (shown in gray, line 11). We then move left and down the blocks from $S'$, pushing those intersecting with $G$ into a results set $Y$ (lines 15--20). For example, the block $S_1$ is added to $Y$ as it is left and down from $S'$ and intersects $G$. The algorithm returns $Y$ (hashed green).}
    \label{fig:until}
\end{figure*}

\input{algo-until} %****************************************

\begin{lemma}\label{lem:moveback approx}
    If the inputs $\satdom{\stlsubfmatom}$ and $\satdom{\stlfm{\psi}^\tau}$ to \moveBack$_{out/in}$ are non-convex polytopes, then so is the output set.
    Moreover, \moveBackIn{} and \moveBackOut{} give inner and outer approximations, respectively, of the satdomain. Formally:
    $$\text{\moveBackIn}(\stlsubfmatom \until{I} \stlfm{\psi}^\tau) \subseteq \satdom{\stlsubfmatom \until{I} \stlfm{\psi}^\tau} \subseteq \text{\moveBackOut}(\stlsubfmatom \until{I} \stlfm{\psi}^\tau).$$
\end{lemma}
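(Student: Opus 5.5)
We will prove the lemma in two parts, following the structure of the algorithm: first closure under NCPs, then the two inclusions, using the characterization (U1)--(U4) of $\satdom{\stlsubfmatom \until{I} \stlfm{\psi}^\tau}$ from \autoref{sec:characterizing the satdomain}.

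\emph{NCP closure.} The output $Y$ of either \moveBack{} variant is a finite union of axis-aligned boxes, each intersected with $\eccut$ and with the polytope $G$. Here $G$ is obtained from $S'$ by a Minkowski difference with $\ilift{I}$, exactly as in the Eventually case. Finiteness follows because the input NCPs consist of finitely many convex polytopes. Each of these has a bounding box (or, for \moveBackIn, a finite packing of inscribed boxes of fixed size). The vertex-induced partitions and the refinements on lines 10--12 then produce finitely many blocks. Each block intersected with $G$ and $\eccut$ is a polytope, by \autoref{thm:satdomains are NCPs} and the fact that intersections and shifts of polytopes are polytopes. Connectivity of each piece comes from the way blocks are added: every block is reached by a descending chain of adjacent blocks from some $S'$. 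We will treat $Y$ per seed block $S'$, so each piece is a connected union of polytopes.

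\emph{Outer inclusion.} Take $C$ in the satdomain. By (U1)--(U4) there is a strictly increasing curve $\gamma$ from $C$ to some $C' \in \satdom{\stlfm{\psi}^\tau}$ with $C'[N+1]-C[N+1]\in I$, and $\gamma$ lies in $\satdom{\stlsubfmatom}$ before $C'$.
\begin{enumerate}
\item $C'$ lies in some refined block $S'$ of the over-approximating boxes, and $C \in G$ by construction of $G$.
\item Since bounding boxes contain the true sets, $\gamma$ passes through a finite sequence of blocks of the partitioned bounding boxes of $\satdom{\stlsubfmatom}$. Finiteness comes from compactness of $\gamma$ and the finite partition.
\item Each consecutive pair of blocks in this sequence is adjacent and weakly below-left of its successor, because $\gamma$ is monotone in every coordinate.
\item The backward traversal on lines 15--20 explores all such left-down neighbors. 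It therefore reaches the block containing $C$, which intersects $G$, so $C\in Y$.
\end{enumerate}

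\emph{Inner inclusion.} Take $C\in Y$ from \moveBackIn. Then $C$ lies in an inscribed block $B_0$ of $\satdom{\stlsubfmatom}$ connected by a chain $B_0,B_1,\dots,B_k=S'$ of adjacent, increasing inscribed blocks, with $C\in G$. We construct $\gamma$ piecewise linearly: inside each box, move along a strictly increasing segment to a point on the shared face with the next box, then continue. Because boxes are axis-aligned, this is possible whenever the next box is strictly above on the face. Because the boxes are inscribed, the curve stays in $\satdom{\stlsubfmatom}$, giving (U3). Membership $C\in G$ supplies an endpoint $C'\in S'\subseteq \satdom{\stlfm{\psi}^\tau}$ with $C'\succeq C$ and $C'[N+1]-C[N+1]\in I$, giving (U4). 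The curve must also stay in $\eccut$; we enforce this by clipping each segment with $\eccut$, which is convex, so segments between points of $\eccut$ remain in it.

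\emph{Main obstacle.} The hardest step is the inner inclusion. We must show that the chosen endpoint $C'$ in $S'$ can be reached strictly increasingly from the specific $C$, and not merely from some point of $B_0$. The interval offset in $G$ and the block-chain monotonicity have to be compatible at the same time, and the strict inequality $\prec$ requires care at the shared faces and at $\boldsymbol{0}$ (recall the $\preals$ in $\ilift{I}$). The first approach will be to choose the waypoints greedily: from $C$, move to the nearest point of each shared face that remains strictly dominated by the target region. Any block-level slack will be absorbed into the definition of $G$ as the Minkowski difference of $S'$ with $\ilift{I}$, so that every $C\in G$ has a dominating witness in $S'$ with the correct time offset.
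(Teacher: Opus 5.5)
\textbf{Verdict: genuine gap in the inner inclusion.}

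Your NCP argument and your outer inclusion follow the paper's proof essentially step for step. $G=S'\ominus\ilift{I}$ contains every cut that is $I$-away from a point of $S'$, and the backward traversal visits every block that a strictly increasing path into $S'$ can cross. The gap is the inner inclusion, which the paper dismisses as following ``by similar reasoning'' and which you leave open.

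Your proposed remedy cannot close it. $G$ is computed from $S'$ and $I$ alone, so $C\in G$ only provides some $C'\in S'$ with $C'\succeq C$ and $C'[N+1]-C[N+1]\in I$. It carries no information about the chain of blocks, so there is no ``block-level slack'' to absorb into it. The real constraint is in your piecewise-linear step. Along a chain $S_k,\dots,S_0=S'$, each coordinate keeps the same range until the chain crosses in that coordinate. Every later point of $\gamma$ must strictly exceed $C$ in every coordinate. So if $C$ lies on an upper face of $S_k$ in a coordinate the chain does not cross first, no admissible waypoint exists, and the greedy choice has nothing to pick.

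\textbf{Counterexample with closed blocks.} Here the statement itself fails, not only the proof. Take $N=2$, $\cskew=5$, $I=[1,1]$, $W=[0,5]$, coordinates ordered $(t_1,t_2,\omega)$. Use continuous signals with $\{t: x_1(t)\ge 2\}=[1,2]$, $\{t: x_1(t)\ge 1\}\supseteq[0,2]$, $\{t: x_2(t)\ge 2\}=[0,1]$, $\{t: x_2(t)\ge 1\}=[0,2]$. Let $\stlsubfmatom=(x_1\ge 1\land x_2\ge 2)\lor(x_1\ge 2\land x_2\ge 1)$ and $\stlfm{\psi}^\tau=(x_1\ge 2\land x_2\ge 1)$.
\begin{itemize}
\item The boxes $S_2=[0,1]\times[0,1]\times W$, $S_1=[1,2]\times[0,1]\times W$ and $S_0=[1,2]^2\times W$ are inscribed in $\satdom{\stlsubfmatom}$ and form a valid partition.
\item $S_0$ is also inscribed in $\satdom{\stlfm{\psi}^\tau}$, so it is a destination block, and the traversal reaches $S_1$ and then $S_2$.
\item The cut $C=(0.5,1,2)\in S_2\cap\eccut$ lies in $G$, with witness $(1.5,1.5,3)$, so \moveBackIn{} returns it.
\item Yet every continuous strictly increasing retiming through $C$ has $\rho_1(\omega)<1<\rho_2(\omega)$ for $\omega$ slightly above $2$. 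There $\stlsubfmatom$ is false, so $C$ is not a satcut.
\end{itemize}

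\textbf{What is needed.} You must first adjust the algorithm, for example by using half-open blocks $\prod_m[\ell_m,u_m)$ when forming $G\cap S$. Then you need to prove a waypoint lemma. Two observations make it manageable:
\begin{itemize}
\item The union of two face-adjacent blocks is a box. So one point per block, strictly increasing along the chain, suffices; you do not need points on the shared faces.
\item $\satdom{\stlsubfmatom}$ constrains only the first $N$ coordinates, inside the convex set $\eccut$.
\end{itemize}
The first $N$ coordinates of $C'$ can be pushed up inside $S'$. Only $C'[N+1]$ is pinned by $I$, and that is where the remaining care is needed.
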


\begin{proof}[Proof sketch]
    The first part is easy to establish since intersections, unions and separations of polytopes along hyperplanes produce polytopes.

    The proof for outer-approximation is done via induction. Every operator besides the Until clearly meets the lemma, by \autoref{thm:satdomain captures satisfaction}. For the Until, we analyze \autoref{algo:approx until satcuts} (\moveBackOut).  Without loss of generality, we consider a case where there is a single $S' \subseteq \overline{B}$, where $\overline{B}$ is the  outer approximation of $\satdom{\stlfm{\psi}^\tau}$ (line 10).

    To begin with, we say that a point $C'$ is $I$-away from a point $C$ when $C'[N+1] \in C[N+1] \oplus I$ and $C \preceq C'$. We see that $G$ contains all points that are $I$-away from some point in the destination $S'$. Next, we can say that for all blocks $S_1, \dots, S_k$ from $L$ (for some natural number $k$), $G \cap (S_1 \cup \dots \cup S_k)$ contains every point that (1) is $I$-away from $S'$ and (2) has a strictly increasing path to it within $\overline{A}$.
    These two remarks imply that every satcut $C$ is in some $G \cap S$, and therefore every satcut $C$ is returned by the algorithm. This concludes the proof sketch.
\end{proof}

Both versions of \moveBack{} can be tuned: the box outer approximations of \moveBackOut{} can be made arbitrarily tight by outer-approximating a given polytope with several boxes instead of one, and the inner approximations of \moveBackIn{} can be made arbitrarily tight by packing more boxes of smaller size. Thus approximation quality is traded-off against computational cost.

\paragraph{Complexity of \autoref{algo:approx until satcuts}.}
\label{sec:complexity}
Let $P_1$ and $P_2$ be the numbers of polytopes making up the non-convex polytopes $\satdom{\stlsubfmatom}$ and $\satdom{\stlfm{\psi}^\tau}$, 
$M_1$ and $M_2$ (upper bounds on) the numbers of bounding half-spaces in $\satdom{\stlsubfmatom}$ and $\satdom{\stlfm{\psi}^\tau}$, and $R_1$ and $R_2$ the number of ridges in $\satdom{\stlsubfmatom}$ and $\satdom{\stlfm{\psi}^\tau}$ respectively. 
The complexity of \autoref{algo:approx until satcuts} is then
\begin{equation}
\label{eq:complexity bound}
O\left(NP_1M_1^{\lfloor\frac{N+1}{2}\rfloor} + NP_2M_2^{\lfloor\frac{N+1}{2}\rfloor} + (R_1 + R_2)^{2N+2}\right).    
\end{equation}
Appendix~\ref{sec:appendix complexity proof} sketches a proof of this bound.

\paragraph{Putting it all together.}
In effect, because of the inner/outer approximation of the Until, the monitor consists of two algorithms: the first, \monitorIn, uses \moveBackIn{} on every Until operator, and uses the resulting set at the next level of the formula. 
The second, \monitorOut, uses \moveBackOut{}.
Putting the above lemmas together yields

\begin{theorem}
Given a \distl{} formula $\stlsubfm$ and signal $\dstrsig$, 
$$\xmonitorIn(\stlsubfm) \subseteq \satdom{\stlsubfm} \subseteq \xmonitorOut(\stlsubfm).$$
\end{theorem}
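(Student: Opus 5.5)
The plan is structural induction on the DiSTL formula $\stlsubfm$, proving the sandwich $\xmonitorIn(\stlsubfm) \subseteq \satdom{\stlsubfm} \subseteq \xmonitorOut(\stlsubfm)$ together with the invariant that both monitor outputs are non-convex polytopes. The invariant is needed so that each recursive step is a legitimate polytope operation. The key observation is that every operator of \autoref{thm:satdomain captures satisfaction} other than negation-free atoms is \emph{monotone} in its set arguments. Union, intersection, and $S \mapsto S \ominus \ilift{I}$ all preserve $\subseteq$, and DiSTL has no negation of compound formulas. Monotonicity is exactly what lets inclusions established for subformulas propagate upward.

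\textbf{Base cases.} For $\top$, $\bot$, $p_n$ and $\lnot p_n$, both monitors compute the sets given in \autoref{thm:satdomain captures satisfaction} exactly. Equality therefore holds, and the sets are NCPs by \autoref{thm:satdomains are NCPs}.

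\textbf{Non-Until inductive steps.} For $\stlsubfm \lor \psi^\tau$, $\stlsubfm \land \stlsubfmatom$ and $\eventually_I \stlsubfm$, assume the sandwich holds for the immediate subformulas. By \autoref{thm:satdomain captures satisfaction}, $\satdom{\cdot}$ of the compound formula is obtained by applying a monotone set operation $F$ to the subformula satdomains, and each monitor applies the same $F$ to its own recursive outputs. Monotonicity then gives
\[
F(\xmonitorIn(\cdot)) \subseteq F(\satdom{\cdot}) \subseteq F(\xmonitorOut(\cdot)).
\]
For $\ominus$ this needs a one-line check: if $s_1 \in S_1 \subseteq S_1'$, then $s_1 - s_2 \in S_1' \ominus \ilift{I}$. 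The NCP property is preserved by \autoref{thm:satdomains are NCPs}.

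\textbf{The Always case.} Here $\always_{[b,c]}\stlsubfmatom$ is rewritten via \autoref{thm:satdomain captures satisfaction} as $\eventually_{[b,b]}(\stlsubfmatom \until{[c-b,c-b]} \top)$, so it reduces to an Eventually applied to an Until. This case therefore follows once the Until case is done.

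\textbf{The Until case.} This is the main obstacle. \autoref{lem:moveback approx} states
\[
\text{\moveBackIn}(\stlsubfmatom \until{I} \psi^\tau) \subseteq \satdom{\stlsubfmatom \until{I} \psi^\tau} \subseteq \text{\moveBackOut}(\stlsubfmatom \until{I} \psi^\tau),
\]
but only when \moveBack{} is fed the \emph{exact} input satdomains. In the monitors, however, the inputs are the already-approximated sets $\xmonitorIn(\cdot)$ or $\xmonitorOut(\cdot)$ of the subformulas. I would close this gap by proving that \moveBackIn{} and \moveBackOut{} are themselves monotone in their two input sets.

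The cleanest way to get this is through the (U1)--(U4) characterization. Define $U(A,B)$ as the set of $C \in A$ from which a strictly increasing curve inside $A$ reaches some $C' \in B$ with $C'[N+1] - C[N+1] \in I$. Then $U$ is clearly monotone in both $A$ and $B$. The proof of \autoref{lem:moveback approx} in fact shows $\text{\moveBackOut}(A,B) \supseteq U(A,B)$ and $\text{\moveBackIn}(A,B) \subseteq U(A,B)$ for arbitrary NCP inputs $A,B$, not just exact satdomains. Chaining these gives
\[
\text{\moveBackIn}(A_{in},B_{in}) \subseteq U(A_{in},B_{in}) \subseteq U(A,B) = \satdom{\stlsubfmatom\until{I}\psi^\tau} \subseteq U(A_{out},B_{out}) \subseteq \text{\moveBackOut}(A_{out},B_{out}),
\]
where $A,B$ are the exact satdomains and $A_{in} \subseteq A \subseteq A_{out}$, $B_{in} \subseteq B \subseteq B_{out}$ by the induction hypothesis.

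I expect the delicate point to be verifying that the lemma's argument really applies to arbitrary NCP inputs. In particular, the box refinement must not rely on the inputs being exact satdomains, so I would re-read the proof sketch of \autoref{lem:moveback approx} with generic $A,B$ in place of the satdomains. The NCP part of the invariant for the Until case comes directly from the first claim of \autoref{lem:moveback approx}.
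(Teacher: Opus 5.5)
Your proposal is correct and takes essentially the same route as the paper. The paper runs an induction over the formula: the non-Until cases are exact by \autoref{thm:satdomain captures satisfaction}, and the Until case is handled by \autoref{lem:moveback approx}, whose appendix proof already feeds \moveBackOut{} supersets $A_1 \supseteq \satdom{\stlsubfmatom}$ and $B_1 \supseteq \satdom{\stlfm{\psi}^\tau}$ of the exact satdomains. Your explicit monotonicity of the set operations and of the (U1)--(U4) set $U(A,B)$ is a cleaner statement of the step the paper uses implicitly, namely that a witnessing path inside $\satdom{\stlsubfmatom}$ is also a path inside $\overline{A_1}$.
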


We remind the reader that the inclusions can be made arbitrarily tight by using a larger number of tighter inner and outer bounding boxes in \moveBack.

\section{A Boxed Implementation and Experimental Results}
\label{sec:experiments}
Given the complexity of \autoref{algo:approx until satcuts} (see \autoref{eq:complexity bound}), an exact implementation is impractical. 
The main source of complexity is in the partitioning of arbitrary polytopes along every edge. 
To avoid this, we modify the algorithm so as to always process boxes, and not arbitrary polytopes. 
This is done by outer-approximating the satdomain by boxes at every subformula.
We refer to this as the \textit{boxed algorithm}.
Our experimental results illustrate that this approach efficiently produces satdomains even for large numbers of agents $N$ and large clock skew $\cskew$.

\paragraph{Experimental setup.}
We develop a Rust implementation of the boxed algorithm. Experiments were conducted on a machine with an AMD Ryzen 7 PRO 7840U CPU and 16 GB of LPDDR5 RAM.
We generated continuous-time signals taking values of +1 and -1, switching between them linearly every 10 ms with probability 0.1. This provided each signal with an average root rate of 10 roots/s. Each signal had a length of 6 s.

We chose three different formulas parametrized by number of agents $N$:
\\
Formula 1: $(x_1 \geq 0) \land (x_2 \geq 0) \land \dots \land (x_N \geq 0)$
\\
Formula 2: $\eventually_{[1,2]} \always_{[0,3]} ((x_1 \geq 0) \lor (x_2 \geq 0) \lor \dots \lor (x_N \geq 0))$
\\
Formula 3: $(x_1 \geq 0) \until{[0,3]} (\always_{[0,6]} ((x_2 \geq 0) \lor (x_3 \geq 0) \lor \dots \lor (x_N \geq 0)))$

We ran two comparisons: 1) evaluating runtime for different numbers of agents, and 2) evaluating runtime at different clock skews $\cskew$. For each formula, we monitored ten different distributed signals to observe the spread of runtimes.

\begin{figure}[!ht]
    \centering
    \includegraphics[width=0.9\textwidth]{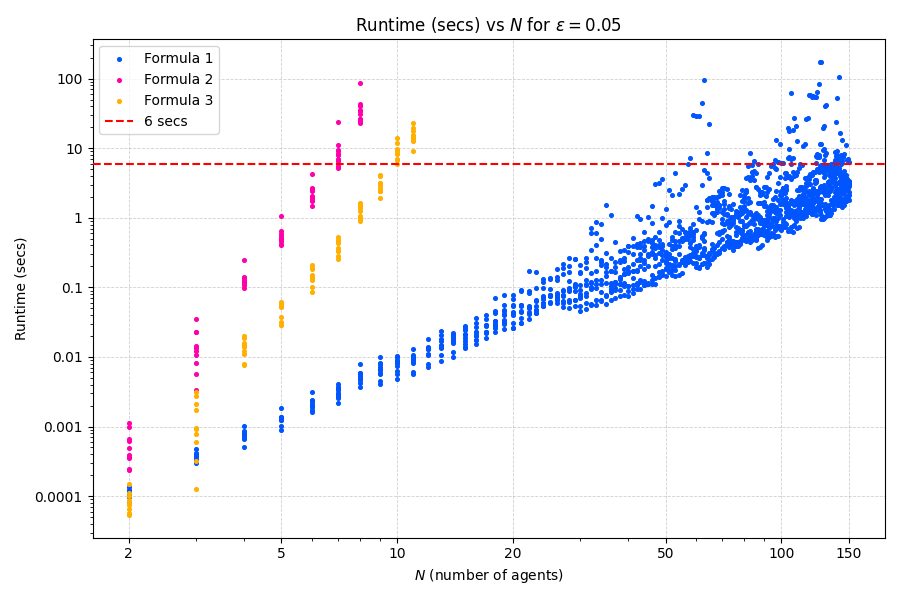}
    \caption{Runtime for varying numbers of agents (N). Ten distributed signals were run for each N and formula pairing. $\cskew = 0.05$. Plotted as log-log.}
    \label{fig:n results}
\end{figure}

\begin{figure}[!ht]
    \centering
    \includegraphics[width=0.9\textwidth]{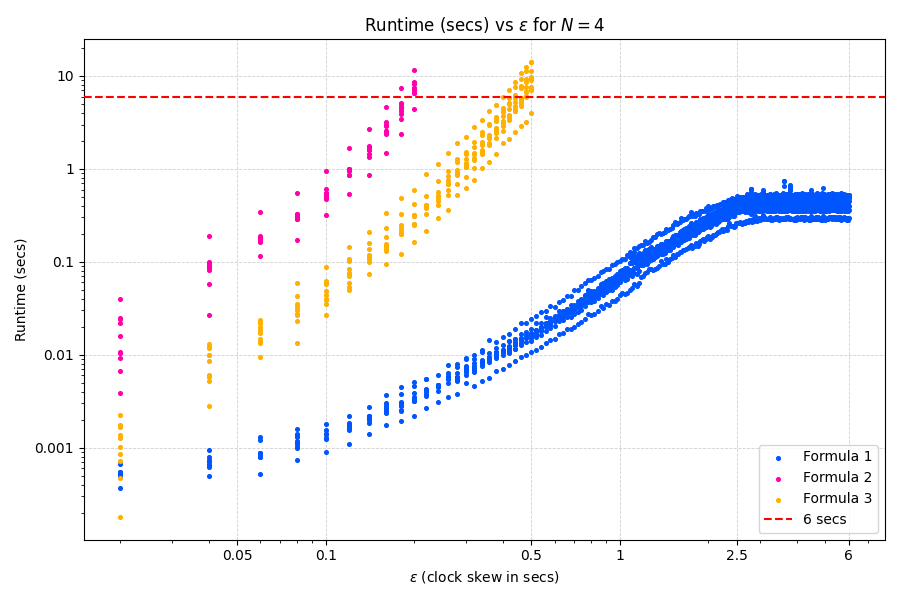}
    \caption{Runtime for varying clock skews ($\cskew$). Ten distributed signals were run for each $\cskew$ and formula pairing. $N = 4$. Plotted as log-log.}
    \label{fig:eps results}
\end{figure}

\textbf{Monitoring the boolean Formula 1 is much faster than monitoring the temporal formulas 2 and 3.} 
These runtimes are shown in \autoref{fig:n results}, which also shows the cut-off line for real-time monitoring: given that the signal is 6s long, a monitoring time of less than 6s allows real-time monitoring. 
Thus we can see that up to 57 agents can be monitored against formula 1, 9 agents against formula 3, and 6 agents against formula 2. 

\textbf{At larger $N$, monitoring formula 1 produced a wide spread in runtimes across different signals.} 
For example, at 63 agents, monitoring one signal required 95 seconds, whereas another required only 0.2 seconds. This is likely due to the number of boxes processed for each of these signals; if no intersections occur between the (satdomains of) predicates of the formula, then no further work is necessary. If instead there are many intersections, then at higher levels of the formula's syntax tree there will be more work for the intersection calculations. This illustrates the significant effect that the input signal's values have on the runtime. In general, this dependence of runtime on monitored signal holds for the classical synchronous setting as well: the more times predicates change value over the course of a signal, the more events the monitor has to track \cite{donze2010robust}.

\textbf{At fixed $N$, runtimes increase with clock skew $\cskew$, but level off for boolean formula 1.} Runtime versus clock skew is shown in \autoref{fig:eps results}. 
The increased runtime is expected as a larger skew means more events could be concurrent, so the satdomains are larger and yield more intersections.
For formula 1, runtime appears to level off around $\cskew=2.5$ seconds. This is likely because for larger $\cskew$, the boxes were not constrained by the clock skew boundary, so changing the clock skew had no effect on the operations performed. 
Formulas 2 and 3 do not see this leveling off in the experiments.

\textbf{The monitor could handle large clock skews.} Namely, the monitor is real-time (below 6 secs) for skews between 0.16 and 6 secs, depending on formula.
These are all large skews that, depending on the application domain, would likely be larger than an application's identified clock skew. Of course, as the number of agents increases, the runtimes increase as well, as explained earlier.

Overall, these results demonstrate that it is possible to do conservative real-time monitoring with our algorithm for DiSTL formulas, and that simpler formulas allow for a significant number of agents to be handled, while more complex formulas still allow for moderate group sizes at realistic clock skews. 

%================================================================================================
\section{Conclusion}
\label{sec:conclusion}
We have provided the first theoretical characterization for the satisfaction domains of dense-time temporal logic formulas in partially synchronous distributed CPS, and demonstrated the first algorithm that returns all satisfactions of such formulas, not only satisfactions at global time 0. The monitor works on a fragment of dense-time STL that includes all temporal operators. We implemented a conservative monitor based on the algorithm and demonstrated experimentally that monitoring can be effective even for large numbers of agents. Future work will address reductions in complexity, and further optimizations of the implementation.

\input{appendices}

\begin{credits}
% \subsubsection{\ackname} A bold run-in heading in small font size at the end of the paper is
% used for general acknowledgments, for example: This study was funded
% by X (grant number Y).

\subsubsection{\discintname}
The authors have no competing interests to declare that are
relevant to the content of this article.
\end{credits}
%
% ---- Bibliography ----
%
% BibTeX users should specify bibliography style 'splncs04'.
% References will then be sorted and formatted in the correct style.
%
\bibliographystyle{splncs04}
\bibliography{references}
\end{document}

%% file: figs/or_example.tex
\begin{tikzpicture}[scale=0.55]
% Axes
\draw[gray!50,very thin] (0,0) grid (8.8,8.8);
\draw[->] (0,0) -- (9,0);
\draw[->] (0,0) -- (0,9);
\draw (2pt,0) -- (-2pt,0) node[anchor=east,xshift=2pt] {\footnotesize $0$};
\draw (0,2pt) -- (0,-2pt) node[anchor=north,yshift=1pt] {\footnotesize $0$};

% Axis labels
\node[anchor=north] at (8.5,0) {$N+1$};
\node[anchor=north] at (-0.5,8.5) {$t_k$};

% Polytopes
% Polytope phi
\fill[white] (1,6) -- (1,2) -- (6,2) -- (6,4) -- (4,4) -- (4,6) -- cycle; % For hiding help lines behind phi
\fill[white] (2,1) -- (8,1) -- (8,5) -- (6,5) -- cycle; % For hiding help lines behind psi
\filldraw[fill=blue,fill opacity=0.15,thick] (1,6) -- (1,2) -- (6,2) -- (6,4) -- (4,4) -- (4,6) -- cycle;
\node at (2.6,4) {$\satdom{\varphi^\tau}$};
% Polytope psi
\filldraw[fill=red,fill opacity=0.3,thick] (2,1) -- (8,1) -- (8,5) -- (6,5) -- cycle;
\node at (7,3) {$\satdom{\psi^\tau}$};

% Redraw phi boundary
\draw[thick] (1,6) -- (1,2) -- (6,2) -- (6,4) -- (4,4) -- (4,6) -- cycle;

\end{tikzpicture}

%% file: figs/eventually_example.tex
\begin{tikzpicture}[scale=0.55]
% First plot
% Axes
\draw[gray!50,very thin] (0,0) grid (8.8,8.8);
\draw[->] (0,0) -- (9,0);
\draw[->] (0,0) -- (0,9);
\draw (2pt,0) -- (-2pt,0) node[anchor=east,xshift=2pt] {\footnotesize $0$};
\draw (0,2pt) -- (0,-2pt) node[anchor=north,yshift=1pt] {\footnotesize $0$};

% Epsilon bounds
\draw[->] (4,0) -- (9,5);
\draw[->] (0,4) -- (5,9);

% Axis labels
\node[anchor=north] at (4,0) {$\varepsilon$};
\node[anchor=east] at (0,4) {$\varepsilon$};
\node[anchor=north] at (8.5,0) {$N+1$};
\node[anchor=north] at (-0.5,8.5) {$t_k$};

% Polytope
\fill[fill=white] (1,5) -- (1,0) -- (4,0) -- (7,3) -- (7,7) -- (5,7) -- (5,5) -- (4,5) -- (4,6) -- (2,6) -- (2,5) -- cycle;
\fill[fill=green!50!gray,opacity=0.3] (1,5) -- (1,0) -- (4,0) -- (7,3) -- (7,7) -- (5,7) -- (5,5) -- (4,5) -- (4,6) -- (2,6) -- (2,5) -- cycle;
\fill[pattern={Lines[angle=45,line width=2pt,distance=8pt,xshift=6pt]},pattern color=green!70!black,opacity=0.3] (1,5) -- (1,0) -- (4,0) -- (7,3) -- (7,7) -- (5,7) -- (5,5) -- (4,5) -- (4,6) -- (2,6) -- (2,5) -- cycle;
\draw[thick] (5,7) -- (5,5);
\draw[thick] (4,5) -- (4,6);
\draw[thick] (2,6) -- (2,5);
\draw[thick] (1,5) -- (1,0) -- (4,0) -- (7,3) -- (7,7);
\node at (3.7,3.3) {$\satdom{\eventually_{[1,2]}\varphi^\tau}$};
\draw[dashed,thick] (7,7) -- (5,7);
\draw[dashed,thick] (5,5) -- (4,5);
\draw[dashed,thick] (4,6) -- (2,6);
\draw[dashed,thick] (2,5) -- (1,5);

% Second plot
\begin{scope}[xshift=-11cm]

% Axes
\draw[gray!50,very thin] (0,0) grid (8.8,8.8);
\draw[->] (0,0) -- (9,0);
\draw[->] (0,0) -- (0,9);
\draw (2pt,0) -- (-2pt,0) node[anchor=east,xshift=2pt] {\footnotesize $0$};
\draw (0,2pt) -- (0,-2pt) node[anchor=north,yshift=1pt] {\footnotesize $0$};

% Epsilon bounds
\draw[->] (4,0) -- (9,5);
\draw[->] (0,4) -- (5,9);

% Axis labels
\node[anchor=north] at (4,0) {$\varepsilon$};
\node[anchor=east] at (0,4) {$\varepsilon$};
\node[anchor=north] at (8.5,0) {$N+1$};
\node[anchor=north] at (-0.5,8.5) {$t_k$};

% Polytope
\filldraw[fill=blue!15,thick] (3,5) -- (3,2) -- (5,2) -- (5,4) -- (6,4) -- (6,2) -- (8,4) -- (8,7) -- (7,7) -- (7,5) -- (5,5) -- (5,6) -- (4,6) -- (4,5) -- cycle;
\node at (4.1,4) {$\satdom{\varphi^\tau}$};

\end{scope}

\end{tikzpicture}

%% file: figs/until_example.tex
\begin{tikzpicture}[scale=0.95]

% Axes
% \draw[gray!50,very thin] (0,0) grid (11.8,8.8);
\draw[->] (0,0) -- (12,0);
\draw[->] (0,0) -- (0,9);
\draw (2pt,0) -- (-2pt,0) node[anchor=east,xshift=2pt] {\footnotesize $0$};
\draw (0,2pt) -- (0,-2pt) node[anchor=north,yshift=1pt] {\footnotesize $0$};

\draw (0,2) -- (-2pt,2) node[anchor=east,xshift=2pt] {\footnotesize $2$};
\draw (2,0) -- (2,-2pt) node[anchor=north,yshift=1pt] {\footnotesize $2$};

\draw (0,4) -- (-2pt,4) node[anchor=east,xshift=2pt] {\footnotesize $4$};
\draw (4,0) -- (4,-2pt) node[anchor=north,yshift=1pt] {\footnotesize $4$};

\draw (0,6) -- (-2pt,6) node[anchor=east,xshift=2pt] {\footnotesize $6$};
\draw (6,0) -- (6,-2pt) node[anchor=north,yshift=1pt] {\footnotesize $6$};

\draw (0,8) -- (-2pt,8) node[anchor=east,xshift=2pt] {\footnotesize $8$};
\draw (8,0) -- (8,-2pt) node[anchor=north,yshift=1pt] {\footnotesize $8$};

\draw (10,0) -- (10,-2pt) node[anchor=north,yshift=1pt] {\footnotesize $10$};

\node[anchor=north] at (11.7,0) {$N+1$};
\node[anchor=north] at (-0.5, 9) {$t_k$};

% Polytopes for phi
% \filldraw[fill=blue,fill opacity=0.15,thick] (0,1) -- (1,1) -- (1,2) -- (0,2) -- cycle;
\filldraw[fill=blue,fill opacity=0.15,thick] (3,4) -- (8,2) -- (8,6) -- (4,6) -- (4,7) -- (3,7) -- cycle;
% Polytope psi
\filldraw[fill=red,fill opacity=0.3,thick] (7,5) -- (9,3) -- (9,1) -- (11,1) -- (11,8) -- (8,8) -- cycle;

\fill[blue,opacity=0.07] (3,4) -- (3,2) -- (8,2);
\fill[red,opacity=0.1] (8,8) -- (7,8) -- (7,5) (7,5) -- (7,3) -- (9,3);

% \fill[green,opacity=0.2] (2.9,1.9) rectangle (6.1,6.1);
\fill[green,opacity=0.1] (3,2) rectangle (6,6);

% Since psi is now covering phi, make phi's overlapped borders black again
\draw[thick] (3,4) -- (8,2) -- (8,6) -- (4,6) -- (4,7) -- (3,7) -- cycle;

% Approximations
\draw[thick,dashed] (3,4) -- (3,2) -- (8,2); % Phi approx
\draw[thick,dashed] (8,8) -- (7,8) -- (7,3) -- (9,3); % Psi approx

% Partitioning for S1 and S2
\draw[dashed] (4,6) -- (3,6) (4,6) -- (4,2); % Partitioning for S1
\draw[dashed] (9,3) -- (9,8) (9,3) -- (11,3); % Partitioning for S2

% Refining S1
\draw[dotted,thick] (7,3) -- (7,2) (7,3) -- (3,3);

% Eventually offset
\fill[gray,opacity=0.15] (5.9,0) -- (5.9,-0.1) -- (6,0) -- cycle%
                        (1.9,5.9) -- (2,5.9) -- (2,6) -- cycle;
\filldraw[fill=gray,fill opacity=0.15] (2,0) rectangle (6,6);
\draw (1.9,-0.1) -- (2,0) (5.9,-0.1) -- (6,0) (5.9,5.9) -- (6,6) (1.9,5.9) -- (2,6);
\filldraw[fill=gray,fill opacity=0.2] (1.9,-0.1) rectangle (5.9,5.9);

% \draw (6.9,2.9) rectangle (8.1,6.1);
\fill[violet,opacity=0.1] (7,3) rectangle (8,6);
\fill[pattern={Lines[angle=45,line width=2pt,distance=6pt]},pattern color=violet,opacity=0.3] (7,3) rectangle (8,6);

% \draw (2.9,1.9) rectangle (6.1,6.1);
% \fill[pattern={Lines[angle=45,line width=2pt,distance=8pt]},pattern color=green!70!black,opacity=0.4] (3,2) rectangle (6,6);
\fill[pattern={Lines[angle=45,line width=2pt,distance=8pt,xshift=6pt]},pattern color=green!70!black,opacity=0.4] (3,2) rectangle (4,3);
\fill[pattern={Lines[angle=45,line width=2pt,distance=8pt]},pattern color=green!70!black,opacity=0.4] (4,2) rectangle (6,3);
\fill[pattern={Lines[angle=45,line width=2pt,distance=8pt]},pattern color=green!70!black,opacity=0.4] (3,3) rectangle (4,6);
\fill[pattern={Lines[angle=45,line width=2pt,distance=8pt,xshift=6pt]},pattern color=green!70!black,opacity=0.4] (4,3) rectangle (6,6);

\node[blue!60!black] at (4,7.4) {$\satdom{\varphi^a}$};
\node[red!35!black] at (11,8.4) {$\satdom{\psi^\tau}$};
\node[violet!70!black] (destlabel) at (8.5,2.4) {$S'$};
\node at (2.9,1) {$G$};
\node[green!35!black] (y) at (5.8,6.6) {$Y$};
\node[green!35!black] (s1) at (4.2,1.4) {$S_1$};

% \draw[->] (destlabel) to [bend left=30] (7.5,3.4);
\draw[->] (8.25,2.45) to [bend left=30] (7.5,3.4);
\draw[->] (y) to [bend left=20] (5,5.4);
% \draw[->] (s1) to [bend left=20] (3.5,2.3);
\draw[->] (3.95,1.55) to [bend left=20] (3.5,2.3);
\end{tikzpicture}

%% file: algo-until.tex
\begin{algorithm}[!t]
    \DontPrintSemicolon
    \SetNoFillComment
    \SetKwFor{ForEach}{for each}{\string:}{}
    \SetKwFunction{FDecompose}{decompose}
    \SetKwFunction{FPartition}{partition}
    \SetKwFunction{FRefine}{refine}
    \SetKwFunction{FOffset}{offset}
    \SetKwInOut{Input}{input}
    \SetKwInOut{Output}{output}
    \Input{$\satdom{\stlsubfmatom}$ = a non-convex polytope}
    \Input{$\satdom{\stlfm{\psi}^\tau}$ = a non-convex polytope}
    \Output{$Y \cap \satdom{\top}$, a non-convex polytope}
    $Y \defeq \emptyset$ \tcp*[l]{Initialize our results set to be empty.}
        \ForEach{polytope in $\satdom{\stlsubfmatom}$}{
            Find a bounding box of the polytope and store in set $\overline{A}$.\;
        }
        \ForEach{polytope in $\satdom{\stlfm{\psi}^\tau}$}{
            Find a bounding box of the polytope and store in set $\overline{B}$.\;
        }
        \tcc{Partition $\overline{A}$ and $\overline{B}$ into boxes (``blocks'') such that each side of a block is either a part of the boundary of $\overline{A}$, or is identical to the side of another block.}
        $\mathcal{S}_1 \defeq$ \FPartition{$\overline{A}$};\quad $\mathcal{S}_2 \defeq$ \FPartition{$\overline{B}$}\;
        \tcc{Produce a refined set with the same property above as $\mathcal{S}_1$.}
        $\mathcal{S}_3 \defeq \emptyset$ \tcp*[l]{Initialize the refined set.}
        \ForEach{bounding face $h$ of a block of $\mathcal{S}_2$}{
            \lIf{$h$ passes through $\mathcal{S}_1$}{separate $\mathcal{S}_1$ by $h$ and store the two pieces in $\mathcal{S}_3$.}
        }
        \ForEach{{\em destination block} $S' \in \mathcal{S}_3$ fully contained in   $\overline{B}$}{
            \tcp{Identify possible starting cuts.}
            $G \defeq S' \ominus \ilift{I}$\;
            $L \defeq \{S'\}$ \tcp*[l]{Initialize the set of blocks to track.}
            \While{$L$ is nonempty}{ 
                Pop a block $S$ from $L$.\;
                \lIf{$G \cap S$ is nonempty}{store $G \cap S$ in $Y$.}
                \ForEach{block $S'' \in \mathcal{S}_3$ sharing a side with a lower boundary of $S$}{
                    Push $S''$ to $L$.\;
                }
            }
        }
    \Return{$Y \cap \satdom{\top}$} 
    \caption{Algorithm \moveBackOut{}. (See  Fig.~\ref{fig:until}).}
    \label{algo:approx until satcuts}
\end{algorithm}

%% file: appendices.tex
\appendix

\section{Proofs}
\subsection{Theorem \ref{thm:valid sat sig}}
Before showing the proof for \autoref{thm:valid sat sig}, we need to show a couple of necessary lemmas. We begin by stating that for the non-timed fragment of \distl{} (i.e. formulas $\stlsubfmatom$), retimings do not affect satisfaction.

\begin{lemma}\label{lem:atom no retiming}
    Consider any retiming $\rho$ and any concut $C \in Gph(\rho)$. Then \\ $(\dstrsig_\rho, C[N+1]) \models \stlsubfmatom$ iff $\lambda_{\stlsubfmatom}(C) = \top$.
\end{lemma}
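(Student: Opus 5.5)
We argue by structural induction on the grammar of $\stlsubfmatom$, with $\rho$ and $C \in Gph(\rho)$ fixed throughout. Since $C \in Gph(\rho)$, we may write $\omega \defeq C[N+1]$, and then $C = (\rho_1(\omega), \dots, \rho_N(\omega), \omega)$. This gives $C[n] = \rho_n(\omega)$ for every $n \in [N]$. This identity is the only fact linking the two semantics, and the whole proof rests on it.

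\emph{Base cases.} For $\top$, both sides are true: $(\dstrsig_\rho,\omega)\models\top$ always holds, and $\lambda_\top(C)=\top$ by definition. For $\bot$, interpreted as $\lnot\top$, both sides are false. For $p_n$, the STL semantics gives $(\dstrsig_\rho,\omega)\models p_n$ iff $f_{p_n}\bigl((x_1\circ\rho_1)(\omega),\dots,(x_N\circ\rho_N)(\omega)\bigr)\geq 0$. By \autoref{def:stl gram} this is $x_n(\rho_n(\omega)) - \beta_n \geq 0$, which is $x_n(C[n]) \geq \beta_n$, i.e.\ $\lambda_{p_n}(C)=\top$. For $\lnot p_n$, negating the same chain gives $x_n(C[n]) < \beta_n$, which is exactly $\lambda_{\lnot p_n}(C)$.

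\emph{Inductive cases.} For $\stlsubfmatom_1 \lor \stlsubfmatom_2$ and $\stlsubfmatom_1 \land \stlsubfmatom_2$, the STL semantics (with $\lor$ defined via $\lnot$ and $\land$) evaluates both disjuncts or conjuncts at the same time $\omega$ under the same trace $\dstrsig_\rho$. \autoref{def:sat sig} likewise evaluates both at the same concut $C$. The induction hypothesis, applied to each subformula with the same $\rho$ and $C$, therefore closes both cases immediately.

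The only delicate point is that the hypothesis must be applied with one and the same pair $(\rho,C)$ across all subformulas. This is legitimate precisely because $\stlsubfmatom$ contains no temporal operators, so evaluation never leaves the single point $C$ of $Gph(\rho)$, and no second retiming is ever needed. I expect no real obstacle here beyond handling $\bot$ and $\lor$ as derived connectives consistently with the STL syntax of \autoref{sec:stl}.
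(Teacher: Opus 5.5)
Your proof is correct and follows the same route as the paper's: induction over the grammar, with the atomic case resting on the identity $C[n]=\rho_n(C[N+1])$ for $C\in Gph(\rho)$ and the Boolean cases immediate. Your form $x_n(\rho_n(C[N+1]))$ is the one consistent with the definition $\dstrsig_\rho=(x_1\circ\rho_1,\dots,x_N\circ\rho_N)$, whereas the paper's proof writes $\rho_n^{-1}$ at that step.
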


\begin{proof}
    Cases $\stlsubfmatom = \top$ and $\stlsubfmatom = \bot$ are trivial.
    Case $\stlsubfmatom = p_n$: $(\dstrsig_\rho, C[N+1]) \models p_n \iff  x_n(\rho_n^{-1}(C[N+1])) \geq p_n 
            \iff x_n(C[n]) \geq p_n 
            \iff  \lambda_{p_n}(C)$. Case $\stlsubfmatom = \lnot p_n$ is identical to $\stlsubfmatom = p_n$, replacing $\geq$ with $<$.
The remaining cases (conjunction and disjunction) are immediate.
\end{proof}

Next, the following lemma says that a retiming that passes through a satcut $C$ can always be adjusted to pass through any preceding cut $C'$ as well, without affecting satisfaction at $C$.
Another way of thinking about this is that cuts that precede $C$ are irrelevant for evaluating satisfaction at $C$, since the logic uses future tenses.
\begin{lemma}\label{lem:irr retiming cuts}
    For all cuts $C, C'$ where $C \preceq C'$, $\exists \rho \in P_C \cap P_{C'}: (\dstrsig_\rho, C'[N+1]) \models \stlsubfm$ iff $\exists \rho \in P_{C'}: (\dstrsig_\rho, C'[N+1]) \models \stlsubfm$.
\end{lemma}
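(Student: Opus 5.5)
The direction from left to right is immediate, since $P_C \cap P_{C'} \subseteq P_{C'}$. The work is in the converse. Given $\rho \in P_{C'}$ with $(\dstrsig_\rho, C'[N+1]) \models \stlsubfm$, we build a new retiming $\rho'$ that passes through both $C$ and $C'$. It will agree with $\rho$ from reference time $\omega' \defeq C'[N+1]$ onwards. Satisfaction of an STL formula at time $\omega'$ depends only on the values of the synchronized trace at times $\geq \omega'$, because every operator of \distl{} looks only into the future: $\eventually_I$, $\always_I$ and $\until{I}$ with $I\subseteq\nnreals$, plus boolean combinations. So $\dstrsig_{\rho'}$ and $\dstrsig_\rho$ agree on $[\omega',\infty)$, and hence $(\dstrsig_{\rho'}, \omega') \models \stlsubfm$. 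We would state this future-locality fact as a small auxiliary claim and prove it by structural induction on the STL semantics of \autoref{sec:stl}. The atomic case is pointwise, and the Until case quantifies only over $t',t'' \geq t \geq \omega'$.

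The main step is constructing $\rho'$ on $[0,\omega']$. Write $\omega \defeq C[N+1]$. First assume $C \prec C'$, which gives $C[n] < C'[n]$ for every $n$ and $\omega < \omega'$. We define each agent retiming $\rho'_n$ piecewise linearly through the points $(0,0)$, $(\omega, C[n])$ and $(\omega', C'[n])$, and set $\rho'_n(\theta) \defeq \rho_n(\theta)$ for $\theta \geq \omega'$. Several properties need checking. Each piece must be strictly increasing, which uses $0 \le C[n] < C'[n]$, $0\le \omega<\omega'$, and $\rho_n$ increasing after $\omega'$. Continuity at $\omega'$ holds because $\rho_n(\omega') = C'[n]$, since $C' \in Gph(\rho)$. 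The skew bound $|\theta - \rho'_n(\theta)| \le \cskew$ follows from convexity. The function $\theta \mapsto \rho'_n(\theta) - \theta$ is linear on each piece, and at the breakpoints it takes the values $0$, $C[n]-\omega$ and $C'[n]-\omega'$. All of these lie in $[-\cskew,\cskew]$ because $C,C' \in \Delta_\cskew^{N+1}$. Hence $Gph(\rho') \ni C, C'$, and so $\rho' \in P_C \cap P_{C'}$.

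Some edge cases need separate treatment. If $C = C'$, we take $\rho' = \rho$. If $\omega = 0$, then $C=\boldsymbol{0}$, the first piece degenerates, and we drop it. The main obstacle I expect is exactly this degenerate bookkeeping, together with making the future-locality claim rigorous for nested Untils inside \distl{}. I would handle the nested Untils through the syntactic STL semantics rather than through $\lambda$, since the statement is phrased in terms of $\models$ on $\dstrsig_{\rho}$.
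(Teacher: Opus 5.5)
Your proof is essentially the paper's. It uses the same retiming: linear through $(0,0)$, $(C[N+1],C[n])$ and $(C'[N+1],C'[n])$, and equal to $\rho$ from $C'[N+1]$ on. It also uses the same future-only locality of the semantics, which you make rigorous by structural induction where the paper only asserts it.

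One correction, which the paper shares by silently reducing to $C=\mathbf{0}$ or $\mathbf{0}\prec C\prec C'$:
\begin{itemize}
\item In $\Delta_\cskew^{N+1}$, $C[N+1]=0$ does not force $C=\mathbf{0}$.
\item If $C[n]=0<C[N+1]$ for some $n$, your first segment is flat, not strictly increasing.
\end{itemize}
These are exactly the cuts with $P_C=\emptyset$. For them the lemma is false as stated: take $N=1$, $C=(\cskew,0)$, $C'=(2\cskew,\cskew)$ and $\stlsubfm=\top$. So you should add the hypothesis $P_C\neq\emptyset$, i.e.\ $C=\mathbf{0}$ or $\mathbf{0}\prec C$.
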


\begin{proof}
    The $\implies$ direction is trivial, so we tackle the $\impliedby$ direction.
    The ordering $C \preceq C'$ has two cases: $C = C'$ and $C \prec C'$. When $C = C'$, or when $C=\mathbf{0}$, then the $\impliedby$ direction is also trivially true (recall that every retiming passes through 0 by definition), so for the rest of this proof we consider $\mathbf{0} \prec C \prec C'$.

    Since our formulation of \stl{} only has future-time operators, satisfaction of $(\dstrsig_\rho, C'[N+1]) \models \stlsubfm$ does not rely on retimed signal values prior to $C'[N+1]$. This means we can construct a retiming which passes through both $C'$ and $C$ and still allows for satisfaction of $\stlsubfm$.

    Consider the retiming $ \rho' \defeq (\rho'_1, \dots, \rho'_N)$ where for all $i \in [N]$:
    \[\rho'_i(t) \defeq \begin{cases}
        \frac{C[i]}{C[N+1]}t & t \leq C[N+1] \\
        \rho_i(t) & t \geq C'[N+1] \\
        \frac{C'[i] - C[i]}{C'[N+1] - C[N+1]}(t - C[N+1]) + C[i] & \text{otherwise} \\
    \end{cases}\]

    We next show that this agent retiming $\rho_i'$ is well-formed. To be well-formed as an agent retiming, we can show $\rho'_i$ is strictly increasing, continuous, $\rho_i'(0) = 0$, and $\forall t: |t - \rho_i'(t)| \leq \cskew$. This construction meets all three requirements. For clarity we show why this construction meets the third requirement. When $t \leq C[N+1]$, the maximum difference between $t$ and $\rho_i'(t)$ is at $t = C[N+1]$, where $\rho_i'(t) = C[i]$. Then $|t - \rho_i'(t)| \leq |C[i] - C[N+1]| \leq \cskew$, by the definition of a cut. When $C[N+1] < t \leq C'[N+1]$, the maximum difference between $t$ and $\rho_i'(t)$ is either at $t = C[N+1]$ where $\rho_i'(t) = C[i]$ or at $t = C'[N+1]$ where $\rho_i'(t) = C'[i]$. For either of these, the difference is less than or equal to $\cskew$, by the definition of a cut. Finally, when $t \geq C'[N+1]$, the slope is 1, meaning that the difference between $t$ and $\rho_i'(t)$ remains constant for the entire span, being $|C'[i] - C'[N+1]|$.

    This agent retiming $\rho_i'$ passes through both $C$ and $C'$. Since $x_i(\rho_i(t))$ only affects satisfaction when $t \geq C'[N+1]$, by construction $(\dstrsig_{\rho'}, C'[N+1]) \models \stlsubfm$ is equivalent to $(\dstrsig_\rho, C'[N+1]) \models \stlsubfm$. Therefore $\exists \rho \in P_{C'}: (\dstrsig_\rho, C'[N+1]) \models \stlsubfm$ implies $\exists \rho' \in P_C \cap P_{C'}: (\dstrsig_{\rho'}, C'[N+1]) \models \stlsubfm$.
\end{proof}

Now that we have shown these lemmas, we are ready for the proof of \autoref{thm:valid sat sig}.

\begin{proof}[\autoref{thm:valid sat sig}]
We consider each case of the recursive structure individually.
Cases $\stlsubfm = \top, \bot, p_n$ and $\lnot p_n$ are immediate or follow trivially from the proof of Lemma \ref{lem:atom no retiming}.
Case $\stlsubfm \land \stlfm{\psi}^a$:
\begin{flalign*}
    &\exists \rho \in P_C: (\dstrsig_\rho, C[N+1]) \models \stlsubfm \land \stlfm{\psi}^a \\
    \iff &\exists \rho \in P_C: ((\dstrsig_\rho, C[N+1]) \models \stlsubfm \text{ and } (\dstrsig_\rho, C[N+1]) \models \stlfm{\psi}^a) \\
    \iff &(\exists \rho \in P_C: (\dstrsig_\rho, C[N+1]) \models \stlsubfm) \land \lambda_{\stlfm{\psi}^a}(C) 
    \\
    & \quad \text{(By Lemma \ref{lem:atom no retiming} $\lambda_{\stlfm{\psi}^a}(C)$ does not depend on retiming)} \\
    \iff &\lambda_{\stlsubfm}(C) \land \lambda_{\stlfm{\psi}^a}(C) 
    \iff \lambda_{\stlsubfm \land \stlfm{\psi}^a}(C)
\end{flalign*}
The remaining cases of conjunction and disjunction follow a similar pattern.

Case $\eventually_I \stlsubfm$:
$\exists \rho \in P_C: (\dstrsig_\rho, C[N+1]) \models \eventually_I \stlsubfm 
\iff \exists \rho \in P_C: \exists t' \in C[N+1] \oplus I: (\dstrsig_\rho, t') \models \stlsubfm$
    We can construct a $C' \in Gph(\rho)$ such that every instance of $t'$ is replaced by $C'[N+1]$. Since the only index of $C'$ being used is $C'[N+1]$, this conversion can be treated as an iff.
    \begin{flalign*}
        \iff &\exists \rho \in P_C: \exists C' \in Gph(\rho) \text{ with } \\
        & (C' \succeq C \text{ and } C'[N+1] \in C[N+1] \oplus I):  (\dstrsig_\rho, C'[N+1]) \models \stlsubfm \\
        \iff &\exists C' \succeq C \text{ with } C'[N+1] \in C[N+1] \oplus I: \\
        &\qquad \exists \rho \in P_C \cap P_{C'}: (\dstrsig_\rho, C'[N+1]) \models \stlsubfm \\
        \iff &\exists C' \succeq C \text{ with } C'[N+1] \in C[N+1] \oplus I: \\
        &\qquad \exists \rho \in P_{C'}: (\dstrsig_\rho, C'[N+1]) \models \stlsubfm \qquad\enspace \text{(\autoref{lem:irr retiming cuts})} \\
        \iff &\exists C' \succeq C \text{ with } C'[N+1] \in C[N+1] \oplus I: \lambda_{\stlsubfm}(C') \\
        \iff &\lambda_{\eventually_I \stlsubfm}(C)
    \end{flalign*}

Case $\always_I \stlsubfmatom$ follows a similar pattern.

Case $\stlsubfmatom \until{I} \stlfm{\psi}^\tau$:
    \begin{flalign*}
        &\exists \rho \in P_C: (\dstrsig_\rho, C[N+1]) \models \stlsubfmatom \until{I} \stlfm{\psi}^\tau &\\
        \iff &\exists \rho \in P_C: \exists t' \in C[N+1] \oplus I: (\dstrsig_\rho, t') \models \stlfm{\psi}^\tau \text{ and } \\
        &\qquad \forall t'' \in [C[N+1], t']: (\dstrsig_\rho, t'') \models \stlsubfmatom
    \end{flalign*}
    We can construct a $C' \in Gph(\rho)$ such that every instance of $t'$ is replaced by $C'[N+1]$. Since the only index of $C'$ being used is $C'[N+1]$, this conversion can be treated as an iff.
    \begin{flalign*}
        &\exists \rho \in P_C: \exists C' \in Gph(\rho) \text{ with } (C' \succeq C \text{ and } C'[N+1] \in C[N+1] \oplus I): \\
        &\qquad (\dstrsig_\rho, C'[N+1]) \models \stlfm{\psi}^\tau \text{ and } &\\
        & \qquad \forall t'' \in [C[N+1], C'[N+1]]: (\dstrsig_\rho, t'') \models \stlsubfmatom \\
        \text{iff } &\exists C' \succeq C \text{ with } C'[N+1] \in C[N+1] \oplus I: \exists \rho \in P_C \cap P_{C'}: \\
        &\qquad (\dstrsig_\rho, C'[N+1]) \models \stlfm{\psi}^\tau \text{ and } \\
        & \qquad \forall t'' \in [C[N+1], C'[N+1]]: (\dstrsig_\rho, t'') \models \stlsubfmatom
    \end{flalign*}
    Here we construct a $C'' \in Gph(\rho)$ such that every instance of $t''$ is replaced by $C''[N+1]$. Furthermore, rather than identifying $C''[N+1] \in [C[N+1], C'[N+1]]$, we can look at $C'' \in [C, C']$ since for a retiming $\rho$ there exists exactly one cut in $\rho$ at index $N+1$.
    \begin{flalign*}
        \iff &\exists C' \succeq C \text{ with } C'[N+1] \in C[N+1] \oplus I: \exists \rho \in P_C \cap P_{C'}: \\
        &\qquad (\dstrsig_\rho, C'[N+1]) \models \stlfm{\psi}^\tau \text{ and } &\\
        & \qquad \forall C'' \in Gph(\rho) \text{ with } C'' \in [C, C']: (\dstrsig_\rho, C''[N+1]) \models \stlsubfmatom \\
        \implies &\exists C' \succeq C \text{ with } C'[N+1] \in C[N+1] \oplus I: \\
        &\qquad (\exists \rho \in P_C \cap P_{C'}: (\dstrsig_\rho, C'[N+1]) \models \stlfm{\psi}^\tau) \text{ and } \\
        & \qquad (\exists \rho' \in P_C \cap P_{C'}: \forall C'' \in Gph(\rho') \text{ with } C'' \in [C, C']: \\
        &\qquad (\dstrsig_{\rho'}, C''[N+1]) \models \stlsubfmatom)
    \end{flalign*}
    Notice that this is a single-directional implication, given by a weakening property of first-order logic. Next we show that the implication can go the opposite direction as well.

    Consider a retiming $\rho'' \defeq (\rho''_1, \dots, \rho''_N)$ where for all $i \in [N]$:
    \[\rho''_i(t) \defeq \begin{cases}
        \rho'[i](t) & t < C'[N+1] \\
        \rho[i](t) & t \geq C'[N+1]
    \end{cases}\]
    Since both $\rho$ and $\rho'$ pass through $C'$, this is a valid retiming. This retiming can replace both $\rho$ and $\rho'$, allowing the satisfaction properties to remain true.

    Consider $(\dstrsig_\rho, C'[N+1]) \models \stlfm{\psi}^\tau$: Since our formulation of \stl{} only considers future-time operators, satisfaction only relies on the retiming function for moments $> C'[N+1]$ (\autoref{lem:irr retiming cuts}). This means that $(\dstrsig_\rho, C'[N+1]) \models \stlfm{\psi}^\tau \implies (\dstrsig_{\rho''}, C'[N+1]) \models \stlfm{\psi}^\tau$.

    Next, consider $(\dstrsig_{\rho'}, C''[N+1]) \models \stlsubfmatom$: $\stlsubfmatom$ ensures that satisfaction only relies on $C''$. Since the maximum value of $C''$ is $C'$, this means that $(\dstrsig_{\rho'}, C''[N+1]) \models \stlsubfmatom \implies (\dstrsig_{\rho''}, C''[N+1]) \models \stlsubfmatom$.

    Combining, these facts allow us to consider existence of a single retiming, meaning that we've shown implication in the opposite direction ($\impliedby$).
    \begin{eqnarray*}
        \iff &\exists C' \succeq C \text{ with } C'[N+1] \in C[N+1] \oplus I: \\
        & (\exists \rho \in P_{C'}: (\dstrsig_\rho, C'[N+1]) \models \stlfm{\psi}^\tau) \text{ and } \text{(\autoref{lem:irr retiming cuts})} & \\
        &  (\exists \rho \in P_C \cap P_{C'}: \forall C'' \in Gph(\rho) \text{ with } \\
        &\qquad C'' \in [C, C']: (\dstrsig_\rho, C''[N+1]) \models \stlsubfmatom) \\
        \iff &\exists C' \succeq C \text{ with } C'[N+1] \in C[N+1] \oplus I: \lambda_{\stlfm{\psi}^\tau}(C') \text{ and } \\
        &  \exists \rho \in P_C \cap P_{C'}: \forall C'' \in Gph(\rho) \text{ with } \\
        & C'' \in [C, C']: \lambda_{\stlsubfmatom}(C'')  \text{ (\autoref{lem:atom no retiming})} & \\
        \iff &\lambda_{\stlsubfmatom \until{I} \stlfm{\psi}^\tau}(C)
    \end{eqnarray*}
\end{proof}

\subsection{Theorem \ref{thm:satdomain equalities}}
    The base cases $\satdom{\top}$, $\satdom{\bot}$, $\satdom{p_n}$, and $\satdom{\lnot p_n}$ are all simple.
    
    Case $\satdom{\stlsubfmatom \lor \stlfm{\psi}^a}$:
            $C \in \satdom{\stlsubfmatom \lor \stlfm{\psi}^a}$ 
            iff  $\lambda_{\stlsubfmatom \lor \stlfm{\psi}^a}(C)$ (by \autoref{thm:valid sat sig})
            iff $\lambda_{\stlsubfmatom}(C) \lor \lambda_{\stlfm{\psi}^a}(C)$ 
            iff $(C \in \satdom{\stlsubfmatom})$ or $(C \in \satdom{\stlfm{\psi}^a})$ (by the induction hypothesis) 
            iff  $C \in \satdom{\stlsubfmatom} \cup \satdom{\stlfm{\psi}^a}$.
    
    Cases $\satdom{\stlsubfm \lor \stlfm{\psi}^\tau}$, $\satdom{\stlsubfmatom \land \stlfm{\psi}^a}$ and $\satdom{\stlsubfm \land \stlfm{\psi}^a}$ follow the same proof as the previous case, with the obvious changes.

Case $\satdom{\eventually_I \stlsubfm}$: $C \in \satdom{\eventually_I \stlsubfm}$ iff $\lambda_{\eventually_I \stlsubfm}(C)$, which is equivalent to
\begin{equation*}
\exists C' \succeq C \text{ with } C'[N+1] \in C[N+1] \oplus I: \lambda_{\stlsubfm}(C')
\end{equation*}
which is equivalent to 
\[\exists C' \text{ s.t. }  \lambda_{\stlsubfm}(C') \textrm{ and } C' \in C \oplus \ilift{I}\]
The transition between the last 2 lines is justified as follows: if $I=[0,a]$ then either $C'[N+1]=0$ and $C'=C$ or $C'[N+1]>0$ and $C' \succ C$. In the former case, $C' = C+\mathbf{0}$, and in the latter $C'=C+\mathbf{v}$ for some strictly positive vector $\mathbf{v}$. So $C'\in C \oplus \ilift{I}$ is equivalent to $C'\succeq C$ and $C'[N+1] \in C[N+1] \oplus I$.
If $I$ does not contain 0, then only the second case is possible. 

Continuing with the list of equivalencies, 
            $\iff  \exists C' \in \satdom{\stlsubfm}: C' \in C \oplus \ilift{I} 
            \iff  \exists C' \in \satdom{\stlsubfm}: C' \in \{C + K \mid K \in \ilift{I}\} 
            \iff  \exists C' \in \satdom{\stlsubfm}: \exists K \in \ilift{I}: C' = C + K 
            \iff  \exists C' \in \satdom{\stlsubfm}: \exists K \in \ilift{I}: C = C' - K 
            \iff  C \in \{C' - K \mid C' \in \satdom{\stlsubfm}, K \in \ilift{I}\} 
            \iff  C \in (\satdom{\stlsubfm} \ominus \ilift{I})$

    Case $\satdom{\always_I \stlsubfmatom}$:
        We show 
        that \textbf{(A)} $\lambda_{\always_{[b,c]} \stlsubfmatom}(C) = \lambda_{\eventually_{[b,b]} \always_{[0,c-b]} \stlsubfmatom}(C)$, 
        then show that \textbf{(B) }$\lambda_{\always_{[0,c-b]} \stlsubfmatom}(C) = \lambda_{\stlsubfmatom \until{[c-b,c-b]} \top}(C)$. 
        Together, these show that $\lambda_{\always_{[b,c]} \stlsubfmatom}(C) = \lambda_{\eventually_{[b,b]}\left(\stlsubfmatom \until{[c-b,c-b]} \top\right)}(C)$, and this gives us the desired result since the satdomain is the support of the satisfaction signal $\lambda$.

        First showing \textbf{(A)}. Indeed we have the following equivalences: $\lambda_{\always_{[b,c]} \stlsubfmatom}(C) = \top$ iff 
        \begin{flalign*}
            & \exists \rho \in P_C: \forall C' \in Gph(\rho) \text{ with } 
            C'[N+1] \in C[N+1] \oplus [b,c]: \\
            &\qquad \lambda_{\stlsubfmatom}(C') 
            \\
            \iff & \exists \rho \in P_C: (\forall C' \in Gph(\rho) \text{ with } 
            C'[N+1] \in C[N+1] \oplus [b,c]: \\
            &\qquad \lambda_{\stlsubfmatom}(C')) 
            \\
            & \land \exists C'' \in Gph(\rho): C''[N+1] = C[N+1] + b \land C'' \succeq C 
            \\
            \iff & \exists \rho \in P: \exists C'' \in Gph(\rho): C''[N+1] = C[N+1] + b \land C'' \succeq C 
            \\
            & \land \forall C' \in Gph(\rho) \text{ with } C'[N+1] \in C[N+1] \oplus [b,c]: \lambda_{\stlsubfmatom}(C') 
             \\
            \iff & \exists C'': C''[N+1] = C[N+1] + b \land C'' \succeq C 
            \\
            & \land \exists \rho \in P: C'' \in Gph(\rho) \land \forall C' \in Gph(\rho) \text{ with } \\
            &\qquad C'[N+1] \in C[N+1] \oplus [b,c]: \lambda_{\stlsubfmatom}(C') 
            \\
            \iff & \exists C'' \succeq C \text{ with } C''[N+1] = C[N+1] + b: 
            \\
            & \exists \rho \in P_{C''} \text{ s.t. } 
            \forall C' \in Gph(\rho) \text{ with } \\
            &\qquad C'[N+1] \in C[N+1] \oplus [b,c]: \lambda_{\stlsubfmatom}(C') 
            \\
            \iff & \exists C'' \succeq C \text{ with } C''[N+1] \in C[N+1] \oplus [b,b]: 
            \\
            & \exists \rho \in P_{C''} \text{ s.t. } 
            \forall C' \in Gph(\rho) \text{ with } \\
            &\qquad C'[N+1] \in C''[N+1] \oplus [0,c-b]: \lambda_{\stlsubfmatom}(C') 
            \\
            \iff & \lambda_{\eventually_{[b,b]}\always_{[0,c-b]} \stlsubfmatom}(C)=\top 
        \end{flalign*}

        Next showing \textbf{(B)}. Indeed $\lambda_{\always_{[0,c-b]} \stlsubfmatom}(C)=\top$ iff $\exists \rho \in P_C$ s.t.: 
        \begin{flalign*}
            & \forall C' \in Gph(\rho) \text{ with } 
            C'[N+1] \in C[N+1] \oplus [0,c-b]: \lambda_{\stlsubfmatom}(C') 
            \\
            \iff & \exists C'' \in Gph(\rho): C''[N+1] = C[N+1] + c - b 
            \\
             & \land \forall C' \in Gph(\rho) \text{ with } C'[N+1] \in C[N+1] \oplus [0,c-b]: \lambda_{\stlsubfmatom}(C') \\
            \iff & \exists C'' \in Gph(\rho): C''[N+1] = C[N+1] + c - b 
            \\
            & \land \forall C' \in Gph(\rho) \text{ with } C'[N+1] \in [C[N+1],C''[N+1]]: \\ 
            &\lambda_{\stlsubfmatom}(C') 
        \end{flalign*}
Equivalently $\exists C'' \succeq C \text{ with } C''[N+1] \in C[N+1] \oplus [c-b,c-b]$ and $\exists \rho \in P_C \cap P_{C''}: \forall C' \in Gph(\rho) \cap [C,C'']: \lambda_{\stlsubfmatom}(C')$. In other words, $\lambda_{\stlsubfmatom \until{[c-b,c-b]} \top}(C)$. 

\subsection{Lemma \ref{thm:satdomains are NCPs}}
\begin{proof}
    The cases $\satdom{\bot}$, $\satdom{\top}$ case are trivial.
    $\satdom{p_n}$ is the intersection of polytope $\satdom{\top}$ with a union of polytopes, the latter being the set of concuts $C$ s.t. $x_n(C[n])\geq p_n$. Indeed, given the constraints on our signals in \autoref{def:dstr signal}, $\{t\such x_n(t)\geq p_n\}$ is a disjoint union of intervals. 
    Therefore $\{C \in \eccut \such x_n(C[n])\geq p_n\} = \nnreals^{n-1}\times \{t\such |t-C[N+1]|\leq \varepsilon \text{ and } x_n(t)\geq p_n\} \times \nnreals^{N+1-n}$ is a union of polytopes.
    Similarly for $\satdom{\lnot p_n}$.
    
    The $\lor$ and $\land$ cases follow because the union/intersection of unions of non-convex polytopes (UNCPs) is a UNCP.
    $\satdom{\eventually_I \stlsubfm}$ is a UNCP because $\ominus$ preserves convexity -- if $\satdom{\stlsubfm}$ is a UNCP (the same is true of $\ilift{I}$), it is also a union of polytopes, so operating $\ominus$ on pairs of polytopes and then unioning produces a UNCP.
    $\satdom{\always_{[b,c]} \stlsubfmatom}$ is a UNCP because it is equivalent to a combination of other cases we've shown.

    All operations in \autoref{thm:satdomain captures satisfaction} (namely, unions, intersections, and shifts) produce non-convex polytopes from NCP inputs. \moveBackOut{} thus takes in non-convex polytopes (NCPs), creates box outer-approximations of them, partitions the boxes, then takes intersections of the parts, thus again producing NCPs.
\end{proof}

\subsection{Lemma \ref{lem:moveback approx}}
We show that $\text{\moveBackOut}(\stlsubfmatom \until{I} \stlfm{\psi}^\tau)$ is a superset of $\satdom{\stlsubfmatom \until{I} \stlfm{\psi}^\tau}$. The inner-approximation case follows similar reasoning.

The inputs to \moveBackOut{} are non-convex polytopes (NCPs) $A_1$ and $B_1$ which are supersets of $\satdom{\stlsubfmatom}$ and $\satdom{\stlfm{\psi}^\tau}$, respectively. 
We show that every satcut is contained in the algorithm's output $Y \cap \satdom{\top}$.

Without loss of generality, it is enough to prove the result for the case of a single block $S' \subseteq \overline{B}$ (as illustrated in \autoref{fig:until}).

We say that a cut $C'$ is $I$-away from a cut $C$ when $C'[N+1] \in C[N+1] \oplus I$ and $C \preceq C'$. Then we claim that $G$ contains all points that are $I$-away from some point in the destination $S'$.
To see this, consider that by the definition of $\ilift{I}$, for every point $C' \in S'$ we have that $S' \ominus \ilift{I}$ contains every point $C \preceq C'$ that is $I$-away from $C'$. This means that every satcut $C$ is in $G$.

Next we claim that for all blocks $S_1, \dots, S_k$ from $L$ (for some natural number $k$), $G \cap (S_1 \cup \dots \cup S_k)$ contains every point that 1) is $I$-away from $S'$ and 2) has a strictly increasing path to it within $\overline{A_1}$.
By the definition of a lower boundary, every strictly increasing path within $\overline{A_1}$ crossing into a block $S$ must pass through a lower boundary of $S$. Since every block $S''$ in $\overline{A_1}$ sharing a lower boundary of $S$ is added to $L$, then every strictly increasing path within $\overline{A_1}$ which ends in $S'$ passes through a sequence of blocks where each block is included in $L$ at some point. Thus the union of all blocks $S_1, \dots, S_k$ from $L$ contains all strictly increasing paths to $S'$ within $\overline{A_1}$. Since we have shown that $G$ contains all points that are $I$-away from some point in $S'$, then the intersection $G \cap (S_1 \cup \dots \cup S_k)$ contains every point that 1) is $I$-away from $S'$ and 2) has a strictly increasing path to it within $\overline{A_1}$.

By \autoref{def:sat sig}, this means that every satcut $C$ is in some $G \cap S$. Since every point in $G \cap S$ is added to the results of the algorithm (and all satcuts are contained in $\satdom{\top}$ by definition), every satcut $C$ is returned by the algorithm. Thus we conclude the proof.

\subsection{Complexity Analysis  Eq.~\eqref{eq:complexity bound}}
\label{sec:appendix complexity proof}
We only show the source of the fastest growing term, $M_1^{\lfloor \frac{N+1}{2}\rfloor}$.
 Finding a bounding box for a polytope in $\reals^N$ with $V$ vertices costs $O(N \cdot V)$, since one has to enumerate the vertices and find their minimum and maximum along each dimension. 
To retrieve the worst-case number of vertices, we consider the dual of a cyclic polytope. By the upper bound theorem \cite{seidel1995upper}, cyclic polytopes have the largest possible number of faces for a given number of vertices, so the dual provides the largest possible number of vertices for a given number of faces. This number is $O\left(M^{\lfloor \frac{N+1}{2}\rfloor}\right)$ (see Theorem 5.4.5 of \cite{matousek2013lectures}). Thus the full complexity of finding a bounding box is $O\left(N \cdot M^{\lfloor \frac{N+1}{2}\rfloor}\right)$.